\newcommand{\prob}[1]{\vspace{3mm}\noindent\fbox{\parbox{\textwidth}{#1}}\vspace{3mm}}
\newtheorem{reduction}{\indent Reduction Rule}
\begin{document}
\title{A Discharging Method: Improved Kernels for Edge Triangle Packing and Covering}
\titlerunning{Improved Kernels for Edge Triangle Packing and
Covering}
%


\author{Zimo Sheng \and
Mingyu Xiao\orcidID{0000-0002-1012-2373}\thanks{Corresponding author}}
%
%
\authorrunning{Zimo Sheng\inst{1} \and
Mingyu Xiao\inst{2}}
\institute{School of Computer Science and Engineering, University of Electronic Science and Technology of China, Chengdu, China\\
\email{shengzimo2016@gmail.com}, \email{myxiao@uestc.edu.cn}}
\maketitle              
%


    \begin{abstract}
        \textsc{Edge Triangle Packing} and \textsc{Edge Triangle Covering} are dual problems extensively studied in the field of parameterized complexity.
        Given a graph $G$ and an integer $k$, \textsc{Edge Triangle Packing} seeks to determine whether there exists a set of at least $k$ edge-disjoint triangles in $G$,
        while \textsc{Edge Triangle Covering} aims to find out whether there exists a set of at most $k$ edges that intersects all triangles in $G$.
      Previous research has shown that \textsc{Edge Triangle Packing} has a kernel of $(3+\epsilon)k$ vertices, while \textsc{Edge Triangle Covering} has a kernel of $6k$ vertices.
        In this paper, we show that the two problems allow kernels of $3k$ vertices, improving all previous results. A significant contribution of our work is the utilization of a novel discharging method for analyzing kernel size, which exhibits potential for analyzing other kernel algorithms.
    \end{abstract}

\section{Introduction}
Preprocessing is a fundamental and commonly used step in various algorithms.
However, most preprocessing has no theoretical guarantee on the quality.
Kernelization, originating from the field of parameterized algorithms~\cite{DBLP:books/sp/CyganFKLMPPS15},
now has been found to be an interesting way to analyze the quality of preprocessing.
Consequently, kernelization has received extensive attention in both theoretical and practical studies.

Given an instance $(I,k)$  of a problem, a kernelization (or a kernel algorithm)
runs in polynomial time and returns an equivalent instance $(I',k')$ of the same problem such that $(I,k)$
 is a yes-instance if and only if $(I',k')$
 is a yes-instance, where $k'\leq k$
 and $|I'|\leq g(k)$  for some computable function $g$ only of $k$.
 The new instance $(I',k')$  is called a \emph{kernel} and $g(k)$
 is the size of the kernel. If g(·) is a polynomial or linear function, we classify the problem as having a polynomial or linear kernel, respectively.

\textsc{Edge Triangle Packing} (ETP), to check the existence of $k$ edge-disjoint triangles in a given graph $G$ is NP-hard even on planar graphs with maximum degree 5~\cite{DBLP:journals/siamcomp/Holyer81}. The optimization version of this problem is APX-hard on general graphs~\cite{DBLP:journals/ipl/Kann94}.
A general result of~\cite{DBLP:journals/siamdm/HurkensS89} leads to a polynomial-time $(3/2+\epsilon)$
approximation algorithm for any constant $\epsilon>0$. When the graphs are restricted to planar graphs, the result can be improved.
A polynomial-time approximation scheme for the vertex-disjoint triangle packing problem on planar graphs was given by~\cite{DBLP:journals/jacm/Baker94}, which can be extended to ETP on planar graphs.
In terms of parameterized complexity, a $4k$-vertex kernel and an $O^*(2^{\frac{9k}{2}\log k+\frac{9k}{2}})$-time parameterized algorithm for ETP were developed in~\cite{DBLP:conf/iwpec/MathiesonPS04}. Later, the size of the kernel was improved to $3.5k$~\cite{DBLP:journals/ipl/Yang14}. The current best-known result is $(3+\epsilon)k$~\cite{DBLP:journals/ipl/LinX19}, where $\epsilon>0$ can be any positive constant. On tournaments, there is also a kernel of $3.5k$ vertices~\cite{DBLP:journals/chinaf/YuanFW23}.

Another problem considered in this paper is \textsc{Edge Triangle Covering} (ETC). ETC is the dual problem of ETP, which is to check whether we can delete at most $k$ edges from a given graph such that the remaining graph has no triangle. ETC is also NP-hard even on planar graphs with maximum degree 7~\cite{DBLP:journals/endm/BrugmannKM09}.
In terms of kernelization, a $6k$-vertex kernel for ETC was developed~\cite{DBLP:journals/endm/BrugmannKM09}. On planar graphs, the result was further improved to $\frac{11k}{3}$~\cite{DBLP:journals/endm/BrugmannKM09}.

In this paper, we will deeply study the structural properties of \textsc{Edge Triangle Packing} and \textsc{Edge Triangle Covering} and give some new reduction rules by using a variant of crown decomposition. After that, we will introduce a new technology called the discharging method to analyze the size of problem kernels. Utilizing the new discharging method, we obtain improved kernel sizes of $3k$ vertices for both ETP and ETC. Notably, our results even surpass the previously best-known kernel size for ETC on planar graphs~\cite{DBLP:journals/endm/BrugmannKM09}.

\vspace{-4mm}
\section{Preliminaries}
Let $G=(V,E)$ denote a simple and undirected graph with $n=|V|$ vertices and $m=|E|$ edges.
A vertex is a \emph{neighbor} of another vertex if there is an edge between them. The set of neighbors of a vertex $v$ is denoted by $N(v)$,
and the degree of $v$ is defined as $d(v) = |N(v)|$.
For a vertex subset $V'\subseteq V$, we let $N(V')=\cup_{v\in V'} N(v)\setminus V'$ and $N[V']=N(V')\cup V'$.
The subgraph induced by a vertex subset $V'\subseteq V$ is denoted by $G[V']$ and the subgraph spanned by an edge set $E'\subseteq E$ is denoted by $G[E']$.
The vertex set and edge set of a graph $H$ are denoted by $V(H)$ and $E(H)$, respectively.

A complete graph on $3$ vertices is called a \emph{triangle}.
We will use $vuw$ to denote the triangle formed by vertices $v$, $u$, and $w$. If there is a triangle $vuw$ in $G$, we say that vertex $v$ \emph{spans} edge $uw$.
An \emph{edge triangle packing} in a graph is a set of triangles such that every two triangles in it have no common edge.
The \textsc{Edge Triangle Packing} problem (ETP) is defined below.

\vspace{-2mm}
\prob{
\textsc{Edge Triangle Packing} (ETP)      ~~~\textbf{Parameter:} $k$\\
\textbf{Input}: An undirected  graph $G=(V,E)$, and an integer $k$.\\
\textbf{Question}: Does there exist an edge triangle packing of size at least $k$ in $G$?}

An edge \emph{covers} a triangle if it is contained in the triangle. An \emph{edge triangle covering} in a graph is a set of edges $S$ such that there is no triangle after deleting $S$ from $G$.
The \textsc{Edge Triangle Covering} problem (ETC) is  defined below.

\vspace{-2mm}
\prob{
\textsc{Edge Triangle Covering} (ETC)      ~~~\textbf{Parameter:} $k$\\
\textbf{Input}: An undirected  graph $G=(V,E)$, and an integer $k$.\\
\textbf{Question}: Does there exist an edge triangle covering of size at most $k$ in $G$?}

\section{Fat-Head Crown Decomposition}
One important technique in this paper is based on a variant of the crown decomposition.
Crown decomposition is a powerful technique for the famous \textsc{Vertex Cover} problem and it has been extended to solve several related problems~\cite{DBLP:conf/iwpec/DehneFRS04,DBLP:journals/tcs/XiaoK20,DBLP:conf/tamc/XiaoK17,DBLP:conf/mfcs/CervenyCS22}.
Specifically, we employ a specific variant called the fat-head crown decomposition to tackle (ETP)~\cite{DBLP:journals/ipl/LinX19}. This variant of the crown decomposition will also be applied in our algorithms for both ETP and ETC. To provide a comprehensive understanding, let us begin by introducing the definition of the fat-head crown decomposition.

A \emph{fat-head crown decomposition} of a graph $G=(V,E)$  is a triple $(C,H,X)$ such that $C$ and $X$ form a partition of $V$ and $H\subseteq E$ is a subset of edges satisfying the following properties:\\
1. $C$ is an independent set.\\
2. $H$ is the set of edges spanned by at least one vertex in $C$.\\
3. No vertex in $C$ is adjacent to a vertex in $X\setminus V(H)$. \\
4. There is an edge-disjoint triangle packing $P$ of size $|P|=|H|$ such that each triangle in $P$  contains exactly one vertex in $C$ and exactly one edge in $H$. The packing $P$ is also called the \emph{witness packing} of the fat-head crown decomposition.

%
%
%

\begin{figure}[t]
    \centering
    \includegraphics[scale=0.22]{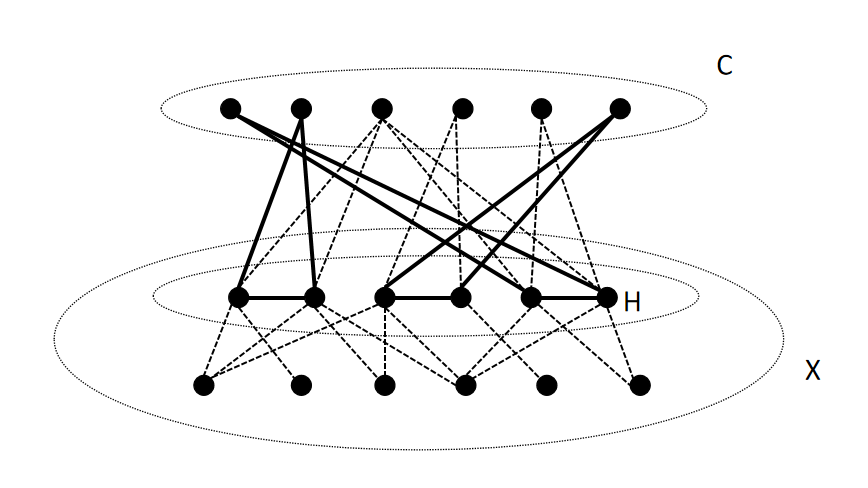}
    \caption{An illustration for the fat-head crown decomposition}
    \label{FCD}
    \vspace{-2mm}
\end{figure}

An illustration of the fat-head crown decomposition is shown in Figure~\ref{FCD}. To determine the existence of fat-head crown decompositions in a given graph structure, we present three lemmas.

\begin{lemma}\label{lemma_cr}\emph{(Lemma 2 in~\cite{DBLP:journals/ipl/LinX19})}
 Let $G=(V,E)$ be a graph such that each edge and each vertex is contained in at least one triangle. Given a  non-empty independent set $I\subseteq V$ such that $|I|>|S(I)|$, where $S(I)$  is the set of edges spanned by at least one vertex in $I$. A fat-head crown decomposition $(C,H,X)$ of $G$ with $C\subseteq I$ and $H\subseteq S(I)$ together with a witness packing $P$ of size $|P|=|H|>0$ can be found in polynomial time.
\end{lemma}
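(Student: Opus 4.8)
\medskip
\noindent\textbf{Proof proposal.}
The plan is to run the classical crown-decomposition argument on a suitable auxiliary bipartite graph. First I would build the bipartite graph $B$ whose two sides are $I$ and $S(I)$, joining $v\in I$ to an edge $e=uw\in S(I)$ exactly when $v$ spans $uw$ (that is, when $vuw$ is a triangle of $G$). Since every vertex of $G$ lies in a triangle and $I$ is independent, each $v\in I$ sits in a triangle $vab$ with $ab\in E$, so $v$ spans $ab$; hence $B$ has no isolated vertex on the $I$-side, and $|I|>|S(I)|$ by hypothesis. I would then take a maximum matching $M$ of $B$; as $|I|>|S(I)|\ge|M|$, the set $I_0\subseteq I$ of $M$-unmatched vertices of $I$ is non-empty. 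Let $R$ be the set of vertices of $B$ reachable from $I_0$ along $M$-alternating paths that start with a non-matching edge, and set $C:=I\cap R$, $H:=S(I)\cap R$, $X:=V\setminus C$. Taking $v\in I_0$ together with any neighbour of it already shows $H\neq\emptyset$.

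The engine of the argument is a pair of standard alternating-path observations. (i) $N_B(C)\subseteq H$: a neighbour in $B$ of a reachable $I$-vertex is joined to it by a non-matching edge and is therefore itself reachable. (ii) Every $e\in H$ is $M$-matched --- otherwise an $M$-augmenting path from $I_0$ to $e$ would contradict the maximality of $M$ --- and its partner $v_e$ is reachable, hence $v_e\in C$, and $v_e$ spans $e$. Thus $M$ restricts to a matching saturating $H$ that injects $H$ into $C$, so $C\neq\emptyset$ and $|H|>0$. I also want to record a consequence of (ii) and the independence of $C$: every edge of $H$ has \emph{both} endpoints outside $C$, for if $e=uw\in H$ had $u\in C$ then $v_e\in C\subseteq I$ would be a $G$-neighbour of $u\in I$, contradicting that $I$ is independent.

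It then remains to verify the four defining properties for $(C,H,X)$ with witness packing $P=\{\,v_e uw : e=uw\in H\,\}$, where $v_e$ is the $M$-partner of $e$. Property~1 is clear since $C\subseteq I$. Property~2 follows by combining (i) (every edge spanned by a vertex of $C$ lies in $H$) with (ii) (every $e\in H$ is spanned by $v_e\in C$). For property~3, if $v\in C$ were adjacent in $G$ to some $x\in X\setminus V(H)$, then the edge $vx$ lies in a triangle $vxy$, so $v$ spans $xy$, whence $xy\in H$ by (i) and $x\in V(H)$, a contradiction --- here I use that every edge of $G$ lies in a triangle. For property~4, each triangle $v_e uw\in P$ contains the edge $e\in H$, its other two edges contain $v_e\in C$ and hence do not lie in $H$ (both endpoints of an $H$-edge are outside $C$), and its other two vertices $u,w$ are $G$-neighbours of $v_e\in C\subseteq I$ and hence do not lie in $C$; thus each triangle of $P$ meets $C$ in exactly one vertex and $H$ in exactly one edge. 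Finally, $e\mapsto v_e$ is injective, and a shared edge of $v_euw$ and $v_{e'}u'w'$ with $e\neq e'$ would have to be either an $H$-edge incident to a vertex of $C$ or the non-edge $v_ev_{e'}$ of the independent set $C$ --- both impossible; so the triangles of $P$ are pairwise edge-disjoint and $|P|=|H|$. Building $B$ by enumerating triangles, computing $M$, and the alternating-path search all take polynomial time. The only genuine care points I anticipate are establishing (i) and (ii) rigorously and the ``both endpoints outside $C$'' claim, which is precisely what makes properties~2 and~4 work; the remainder is routine bookkeeping.
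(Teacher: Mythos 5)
Your proof is correct and complete: the bipartite graph between $I$ and $S(I)$, a maximum matching, and the alternating-path reachability set is exactly the standard crown-decomposition machinery used for this lemma (the paper itself only cites it from Lin and Xiao and does not reproduce a proof, but its own Lemma~\ref{FCD_exist} invokes the same auxiliary bipartite graph and expansion-lemma argument). Your care points --- that every $H$-edge is matched into $C$, and that both endpoints of an $H$-edge lie outside $C$ by independence of $I$ --- are precisely the observations needed to verify properties 2--4 and the edge-disjointness of the witness packing, so nothing is missing.
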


\begin{lemma}\label{FCD_exist}
Given a graph $G=(V,E)$, a vertex set $A\subseteq V$, and an edge set $B\subseteq E$, where $A\cap V(B)=\emptyset$.
There is a polynomial-time algorithm that checks whether there is a fat-head crown decomposition $(C,H,X)$ such that $\emptyset \neq C \subseteq A$ and $H\subseteq B$ and outputs one if yes.
\end{lemma}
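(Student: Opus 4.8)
The plan is to reduce Lemma~\ref{FCD_exist} to the constructive crown lemma (Lemma~\ref{lemma_cr}): first preprocess $(G,A,B)$ into a ``cleaned'' instance, then move to an auxiliary graph in which every vertex and edge lies in a triangle so that Lemma~\ref{lemma_cr} is applicable, and finally handle the one residual case in which Lemma~\ref{lemma_cr} has no surplus to work with. Throughout I will use that, by property~2, a fat-head crown decomposition has $H=S(C)$, the set of edges spanned by $C$; hence $V(H)\subseteq N(C)$, and so property~3 is equivalent to the ``self-covering'' inclusion $N(C)\subseteq V(S(C))$, a form of the condition that is closed under unions of admissible heads.

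\textbf{Cleaning.} Iterate to a fixed point: delete from $A$ every vertex $a$ that spans an edge outside $B$ or has a neighbour outside $V(S(A))$ (with $S(A)$ the current set of edges spanned by vertices of $A$), and delete from $B$ every edge no longer spanned by any vertex of $A$. Call the result $(A^{*},B^{*})$; this is computed in polynomial time. An induction on the deletions shows that every fat-head crown decomposition $(C,H,X)$ with $\emptyset\neq C\subseteq A$ and $H\subseteq B$ already satisfies $C\subseteq A^{*}$ and $H\subseteq B^{*}$, so nothing admissible is lost. At the fixed point $B^{*}=S(A^{*})$, every vertex of $A^{*}$ spans only edges of $B^{*}$ and has all its neighbours in $V(B^{*})$, and since $A\cap V(B)=\emptyset$ the set $A^{*}$ is independent in $G$; if $A^{*}$ becomes empty, answer ``no''.

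\textbf{Auxiliary graph and the surplus case.} Build $\hat G$ on the vertex set $A^{*}\cup V(B^{*})$ whose edges are those of $B^{*}$ together with all edges of $G$ incident to $A^{*}$, and, for each $a\in A^{*}$ and each neighbour $u$ of $a$ that is not an endpoint of an edge spanned by $a$, attach a short gadget through a fresh vertex that places $au$ into a triangle without creating any new edge that $a$ spans. Then every vertex and edge of $\hat G$ lies in a triangle, so Lemma~\ref{lemma_cr} applies. If $|A^{*}|>|S_{\hat G}(A^{*})|$, Lemma~\ref{lemma_cr} returns in polynomial time a fat-head crown decomposition of $\hat G$ with head inside $A^{*}$ and its witness packing; restricting the edges to $E(G)$ and deleting the gadget vertices yields a decomposition $(C,H,X)$ of $G$ with $C\subseteq A$ and $H\subseteq B$, and one checks that all four properties survive — the non-obvious point being that the cleaning and the gadgets together force every real neighbour of a head vertex into $V(H)$, i.e.\ property~3 for $G$. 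Output this decomposition.

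\textbf{The deficient case and the main obstacle.} If instead $|A^{*}|\le|S_{\hat G}(A^{*})|$, Lemma~\ref{lemma_cr} gives nothing and existence must be decided directly; this, together with the faithful transfer of properties~3 and~4 through $\hat G$, is the hard part. Property~4 is genuinely not a plain matching condition — one head vertex may anchor several triangles of the witness packing — so for a \emph{fixed} head $C$ the requirement ``partition $S(C)$ into matchings $M_c\subseteq$ the edges spanned by $c$'' is a list-edge-colouring-type condition, and the leverage must come from the freedom to shrink $C$ (property~3 is not hereditary, but one still works inside the unique maximal self-covering head) and from the special structure of the lists. The plan for this case is to characterise inadmissibility of a head vertex by a bounded fixed-point process — starting from a candidate vertex, forcing in the spanners of the edges that must be covered, and reporting a conflict once two incident must-be-covered edges are forced onto a common spanner — run it in polynomial time, and combine it with the self-covering condition so as to either exhibit a minimal admissible head, whose witness packing is then realised by one maximum-matching computation, or certify that none exists. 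The delicate points are choosing the gadgets so that they never manufacture fake coverage of a head vertex's ``extra'' neighbours, and pinning down exactly when the witness-packing/self-covering pair is feasible for \emph{some} head rather than merely for the maximal one.
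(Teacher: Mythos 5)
There is a genuine gap: your proof never actually resolves the case that the lemma is really about. Lemma~\ref{FCD_exist} is a pure existence check with no surplus hypothesis, yet your argument only goes through when $|A^{*}|>|S_{\hat G}(A^{*})|$, where you can invoke Lemma~\ref{lemma_cr}. In the ``deficient case'' you explicitly label the problem ``the hard part'' and offer only a plan (a ``bounded fixed-point process'', a ``list-edge-colouring-type condition'') whose definition, correctness and termination are not established; that case is the entire content of the lemma, so the proof is incomplete at its core. A secondary but concrete flaw is the gadget: any triangle containing the edge $au$ has a third vertex $z$ adjacent to both $a$ and $u$, so $a$ necessarily spans the new edge $zu$; a gadget that ``places $au$ into a triangle without creating any new edge that $a$ spans'' cannot exist, and once gadget edges enter $S_{\hat G}(a)$ both the surplus count and the claimed transfer of properties~2--4 back to $G$ need a new argument that you do not give.

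The paper avoids the case split entirely by a direct bipartite reduction. First it observes (as you do in your cleaning step) that no vertex of $A$ with a neighbour in $A$ can lie in $C$; it then builds the bipartite graph $G'=(A',B',E')$ whose left side is the remaining $A$-vertices, whose right side is the edges of $B$, and where $w$ is joined to $e$ iff $w$ spans $e$ in $G$. Detecting a fat-head crown decomposition then reduces to detecting a nonempty $C'\subseteq A'$ and $H'\subseteq B'$ admitting a matching of size $|H'|$ in $G'[C'\cup H']$, which is a standard crown/expansion-lemma computation done with one maximum-matching routine in $O(mn^{1.5})$ time --- no surplus is required, and the witness packing is read off directly from the matching. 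Your observation that property~4 is not a plain matching condition (one head vertex may anchor several edge-disjoint triangles) is correct, but it is a red herring here: it suffices for the algorithm to search for decompositions whose witness packing uses each head vertex at most once, since any such object is a valid decomposition (soundness) and the decomposition guaranteed by Lemma~\ref{lemma_cr} in the kernel analysis is of exactly this matching type (completeness where it is needed). Replacing your two-case argument by this single bipartite matching test closes the gap.
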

\begin{proof}
We begin by demonstrating that if there exists an edge $uv \in E$ between two vertices $u$ and $v \in A$, neither $u$ nor $v$ can belong to $C$ in the fat-head crown decomposition. Suppose, without loss of generality, that $u$ is in $C$. Since $C$ is an independent set, $v$ cannot be in $C$. Moreover, $v$ cannot be in $X \setminus V(H)$ since no vertex in $C$ is adjacent to a vertex in $X \setminus V(H)$. Additionally, $v$ cannot be in $V(H)$ since $A \cap V(B) = \emptyset$. Thus, this scenario is impossible. Therefore, neither $u$ nor $v$ can be in $C$ for any fat-head crown decomposition $(C, H, X)$ with $C \subseteq A$ and $H \subseteq B$. Consequently, we can exclude vertices in $A$ that are adjacent to other vertices in $A$ when computing the decomposition.

To facilitate the computation, we construct an auxiliary bipartite graph $G' = (A', B', E')$, where each vertex in $A'$ corresponds to a vertex in $A$ that is not adjacent to any other vertex in $A$, each vertex in $B'$ corresponds to an edge in $B$, and an edge exists between $w \in A'$ and $e \in B'$ if and only if $w$ spans $e$ in $G$. Based on the construction of $G'$, we deduce that a fat-head crown decomposition $(C, H, X)$ exists in $G$ if and only if there exist $C' \subseteq A'$ and $H' \subseteq B'$ in $G'$ such that there is a matching of size $|H'|$ in the induced bipartite graph $G'[C' \cup H']$.

The existence of the matching can be determined in $O(mn^{1.5})$ time using the expansion lemma~\cite{DBLP:journals/jacm/BodlaenderFLPST16}. Thus, the algorithm can verify the existence of a fat-head crown decomposition and output one if it exists, all within polynomial time complexity.
\hfill \qed\par
\end{proof}


\begin{lemma}\label{FCD_correct}
    If there is a fat-head crown decomposition $(C,H,X)$ in $G$, then $G$ has an edge-disjoint triangle packing (resp., edge triangle covering) of size $k$ if and only if the graph $G'$ has
     an edge-disjoint triangle packing (resp., edge triangle covering) of size $k-|H|$, where $G'$ is the graph obtained from $G$ by deleting vertex set $C$ and deleting edge set $H$.
\end{lemma}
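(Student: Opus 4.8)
The plan is to prove both directions of the equivalence by relating a solution in $G$ to a solution in $G'$ through the witness packing $P$ guaranteed by property~4 of the fat-head crown decomposition. Recall that $G'$ is obtained from $G$ by deleting the vertex set $C$ together with the edge set $H$; note that $P$ is an edge-disjoint triangle packing of size exactly $|H|$, each of whose triangles uses exactly one vertex of $C$ and exactly one edge of $H$, and these $|H|$ edges are all distinct (so $P$ witnesses a perfect pairing between $C$-vertices it touches and the edges of $H$).

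For the \emph{packing} direction, I would argue as follows. Given an edge-disjoint triangle packing $P'$ in $G'$ of size $k-|H|$, observe that $P'$ uses no vertex of $C$ and no edge of $H$, so $P' \cup P$ is a set of $k$ triangles in $G$; it remains edge-disjoint because $P'$ avoids all edges of $H$ and $P$ only uses edges of $H$ — wait, the triangles of $P$ also use two non-$H$ edges each, so I must check these cannot collide with edges used by $P'$. Here I would invoke property~3: a triangle in $P$ consists of a vertex $c\in C$, an edge $uw\in H$, and the two edges $cu, cw$; any edge incident to $c$ goes to $V(H)$ (since $c$ has no neighbor in $X\setminus V(H)$ and $C$ is independent), and since $P'$ lives in $G'$ which has $C$ deleted, $P'$ uses no edge incident to any vertex of $C$. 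Hence $P'\cup P$ is edge-disjoint, of size $k$. Conversely, given an edge-disjoint triangle packing $Q$ in $G$ of size $k$, I would show one can transform it into a packing of size $k$ that contains $P$ as a subset: each triangle of $Q$ that uses a vertex of $C$ or an edge of $H$ can be "charged" to a distinct triangle of $P$ (using that $C$ is independent, every triangle through $c\in C$ uses exactly one $H$-edge among $cu,cw$'s opposite edge, and the $P$-triangles saturate $C$ and $H$), so deleting those at most $|H|$ triangles of $Q$ and adding back all of $P$ yields an edge-disjoint packing of size $\ge k$ that restricts to a packing of size $\ge k-|H|$ in $G'$. The delicate point is the counting argument showing at most $|H|$ triangles of $Q$ need to be removed; this is where the witness packing's one-to-one structure between $C$ and $H$ is essential.

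For the \emph{covering} direction, I would proceed dually. Given an edge triangle covering $S'$ of $G'$ of size $k-|H|$, I claim $S' \cup H$ is an edge triangle covering of $G$ of size $k$: any triangle $T$ of $G$ either uses an edge of $H$ (hence is hit by $H$) or uses no edge of $H$; in the latter case, if $T$ also uses no vertex of $C$ then $T$ survives in $G'$ and is hit by $S'$, and if $T$ uses a vertex $c\in C$ then by independence of $C$ and property~3 both edges $cu, cw$ of $T$ go to $V(H)$ while the third edge $uw$ has both endpoints in $V(H)$ — but then I need $uw\notin H$ to derive a contradiction, which is exactly the assumption that $T$ avoids $H$; so I must rule out the case that $T$ has a vertex in $C$ but no $H$-edge. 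Here I would use that $H$ is \emph{the} set of all edges spanned by a vertex of $C$ (property~2): if $c\in C$ spans $uw$, then $uw\in H$, so any triangle through a $C$-vertex necessarily contains an $H$-edge, completing the argument. Conversely, given an edge triangle covering $S$ of $G$ of size $k$, I would show $|S\cap H|\ge$ (something) is not quite the right tack; instead I would argue that $S$ must "pay" at least $|H|$ on the triangles of $P$ — since $P$ consists of $|H|$ edge-disjoint triangles, $S$ contains at least $|H|$ edges meeting $\bigcup P$ — and that replacing $S\cap (C\text{-incident edges} \cup H)$ by $H$ itself does not increase the size and still covers all triangles, after which $S\setminus(\text{deleted part})$ restricted to $G'$ has size $\le k-|H|$ and covers all triangles of $G'$. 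The main obstacle in both directions is the same: cleanly executing the exchange argument so that the witness packing $P$ can be substituted into an optimal solution without loss, which rests entirely on the bijective correspondence between $C$ and $H$ encoded in property~4.
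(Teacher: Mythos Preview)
Your proposal is correct and follows the same approach as the paper: use property~2 to see that every triangle touching $C$ already contains an $H$-edge (so at most $|H|$ triangles of an edge-disjoint packing touch $C\cup H$), and use the witness packing $P$ to force any cover to spend at least $|H|$ edges on $E(P)$. The paper's execution is a bit leaner than yours in two spots---for the packing forward direction it simply discards the at most $|H|$ offending triangles (no need to add $P$ back or do any charging), and for the covering forward direction it takes $M\setminus E(P)$ directly as the cover of $G'$ (no exchange with $H$ is needed, since any triangle of $G'$ contains no edge of $E(P)$)---so the ``delicate point'' you flag is in fact immediate from property~2.
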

\begin{proof}
For a fat-head crown decomposition $(C,H,X)$, there is witness packing $P$ of size $|P|=|H|$.

We first show that there is an edge-disjoint triangle packing $D$ of size $k$ in $G$ if and only if there is an edge-disjoint triangle packing $D'$ of size $k-|H|$ in $G'$. We only prove the sufficient condition and the other direction is obvious since $D'\cup P$ is a valid edge-disjoint triangle packing in $G$.
Let $D^*\subseteq D$ be the set of triangles containing either a vertex in $C$ or an edge in $H$. For the former case, triangle $T$ still contains one edge in $H$ since any triangle containing
a vertex in $C$ must contain one edge in $H$. Thus, each triangle in $D^*$ contains at least one edge in $H$, which means
that $|D^*|\leq |H|$. Therefore, $D-D^*$ is an edge-disjoint
triangle packing of size at least $k-|H|$ in graph $G'$.

Next, we prove that there is an edge triangle covering $M$ of size  $k$ in $G$ if and only if there is an edge triangle covering $M'$  of size $k-|H|$ in $G'$.
For the necessary condition. We prove that $M'\cup H$ is a valid edge triangle covering in $G$.
Consider an arbitrary  triangle $T'$ in $G$.
If $V(T')\cap V(C)=\emptyset$ and $E(T')\cap E(H)=\emptyset$, then $T'$ is covered by $M'$ since $M'$ is an edge triangle covering for $G'$.
If $V(T')\cap C\neq \emptyset$, then $T'$ is covered by $H$ since vertices in $C$ can only span edges in $H$ by the definition of the fat-head crown decomposition.
If $E(T')\cap E(H)\neq \emptyset$, then $T'$ is covered by $H$. For any case, triangle $T'$ is covered by at least one edge in $M'\cup H$.
For the sufficient condition. Let $M^*= M\cap E(P)$. We know that $|M^*|\geq |P|=|H|$ since $P$ is an edge-disjoint triangle packing  of size $|H|$. Thus, $M-M^*$ is an edge triangle covering of size at most $k-|H|$ in $G'$.
\hfill \qed\par
\end{proof}
\section{The Algorithms}
In this section, we present our kernelization algorithms for the \textsc{Edge Triangle Packing} (ETP) and \textsc{Edge Triangle Covering} (ETC) problems.
Our algorithms involve a set of reduction rules that are applied iteratively until no further reduction is possible. Each reduction rule is applied under the assumption that all previous reduction rules have already been applied and cannot be further applied to the current instance.
A reduction rule is \emph{correct} if the original instance $(G,k)$ is a yes-instance if and only if the resulting instance $(G',k')$ after applying the reduction
rule is a yes-instance.

We have one algorithm for ETP and ETC, respectively. The two algorithms are similar. We will mainly describe the algorithm for ETP and introduce the difference for ETC.
In total, we have nine reduction rules. The first four rules are simple rules to handle some special structures, while the remaining five rules are based on a triangle packing. Especially, the last rule will use the fat-head crown decomposition. We will show that the algorithms run in polynomial time.

\subsection{Simple Rules}

\begin{reduction}\label{rule0} For ETP,  if $k\leq0$, then return `yes' to indicate that the instance is a yes-instance;
if $k> 0$ and the graph is empty, then return `no' to indicate that the instance is a no-instance.\\
 For ETC,
if $k\geq0$ and the graph is empty, then return `yes' to indicate that the instance is a yes-instance;
if $k<0$, then return `no' to indicate that the instance is a no-instance.
\end{reduction}

\begin{reduction}\label{rule1} If there is a vertex or an edge not appearing in any triangle, then delete it from the graph.
\end{reduction}


\begin{reduction}\label{4v} If there are 4 vertices $u,v,w,x\in V$ inducing a complete graph (i.e., there are 6 edges $uv,uw,ux,vw,vx,wx\in E$) such that
none of the 6 edges is in a triangle except $uwv,uvx,uwx$, and $vwx$, then \\
- For ETP, delete the 6 edges $uv,uw,ux,vw,vx$ and $wx$ and let $k=k-1$;\\
- For ETC, delete the 6 edges $uv,uw,ux,vw,vx$ and $wx$ and let $k=k-2$.
\end{reduction}

The correctness of Reduction Rule~\ref{4v} is based on the following observation. For ETP, any edge triangle packing can have at most one triangle containing some edge from these 6 edges and we can simply take one triangle from this local structure. For ETC, any edge triangle covering must contain at least two edges from these 6 edges and after deleting $vu$ and $wx$, none of $uw,ux,vw$, and $vx$ is contained in a triangle anymore.


\begin{reduction} \label{split}
    If there is a vertex $v\in V$ such that all edges incident to $v$ can be partitioned into two parts $E_1$ and $E_2$ such no triangle in $G$ contains an edge in $E_1$ and an edge in $E_2$, then split $v$ into two vertices $v'$ and $v''$ such that all edges in $E_1$ are incident on $v'$ and all edges in $E_2$ are incident on $v''$.
\end{reduction}
An illustration of Reduction Rule~\ref{split} is shown in Figure~\ref{fsplit}. This reduction rule will increase the number of vertices in the graph. However, this operation will simplify the graph structure and our analysis.
\begin{figure}[t]
    \centering
   \includegraphics[scale=0.30]{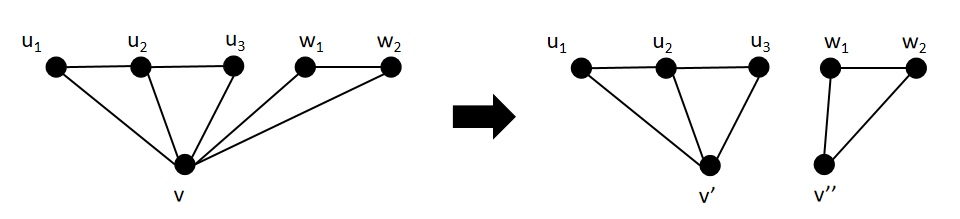}
   \caption{An Illustration for Reduction Rule \ref{split}}
   \label{fsplit}
\end{figure}
\begin{lemma}\label{lem_split}
    Reduction Rule~\ref{split} is correct and can be executed in polynomial time.
\end{lemma}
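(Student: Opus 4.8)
The plan is to prove correctness and polynomial-time executability of Reduction Rule~\ref{split} separately, with the correctness argument being the main content.

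\textbf{Polynomial time.} First I would observe that testing the precondition of the rule amounts to a tractable combinatorial check at each vertex $v$. Fix $v$ and build an auxiliary graph $G_v$ on the edge set incident to $v$, where two such edges $vx$ and $vy$ are joined whenever $xy \in E$ (i.e., $vxy$ is a triangle). A partition $E_1, E_2$ of the edges incident to $v$ with the stated property exists if and only if $G_v$ is disconnected, in which case $E_1$ and $E_2$ are unions of connected components of $G_v$. Building $G_v$ and checking connectedness takes polynomial time, and there are only $n$ choices of $v$, so the whole rule runs in polynomial time; performing the split itself is a trivial graph operation.

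\textbf{Correctness.} Let $G'$ be the graph obtained from $G$ by splitting $v$ into $v'$ and $v''$ (with $k' = k$). The key structural claim is that there is a natural bijection between the triangles of $G$ and the triangles of $G'$: every triangle of $G$ either avoids $v$ (and is unaffected) or contains $v$, in which case its two $v$-incident edges lie both in $E_1$ or both in $E_2$ by hypothesis, so it maps to the triangle of $G'$ obtained by replacing $v$ with $v'$ or $v''$ accordingly; conversely every triangle of $G'$ containing $v'$ or $v''$ uses only edges of $E_1$ resp.\ $E_2$ and hence corresponds to a triangle of $G$. This bijection preserves edge-intersections: two triangles of $G$ share an edge if and only if their images in $G'$ share an edge (an edge of $E_1$ incident to $v$ maps to the corresponding edge at $v'$, one of $E_2$ to an edge at $v''$, and all other edges are literally the same). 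Consequently a set of triangles is an edge-disjoint packing in $G$ if and only if its image is an edge-disjoint packing in $G'$ of the same size, giving the ETP equivalence; and a set of edges (identifying $v'$- and $v''$-incident edges with the original $v$-incident edges) hits all triangles of $G$ if and only if it hits all triangles of $G'$, with the same cardinality, giving the ETC equivalence. Hence $(G,k)$ is a yes-instance if and only if $(G',k')$ is, for both problems.

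\textbf{Main obstacle.} The routine parts are the bijection and the bookkeeping about which edges move to $v'$ versus $v''$. The one point that needs care is verifying that the triangle-to-triangle correspondence is genuinely well-defined and edge-injective even when several split-off copies could in principle interact — in particular confirming that no triangle of $G'$ can use edges incident to both $v'$ and $v''$ (which holds because $v'v''$ is not an edge of $G'$ and no common neighbour can form a triangle through both). I expect this to be the crux of writing the argument cleanly; everything else is direct verification from the definitions of ETP and ETC.
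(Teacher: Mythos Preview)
Your proposal is correct and follows essentially the same approach as the paper: both establish correctness via the natural edge bijection between $G$ and $G'$ (yielding a triangle bijection that preserves edge-disjointness and hence the ETP/ETC answers), and both detect the partition $E_1,E_2$ by computing connected components of the auxiliary structure on the edges incident to $v$ where two such edges are linked when they lie in a common triangle---the paper phrases this as a greedy expansion, you as a connectivity test on $G_v$, but these are the same computation.
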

\begin{proof} First, we consider the correctness.
    Let $G'=(V',E')$ be the graph after applying Reduction Rule~\ref{split} on a vertex $v$.
    We can establish a one-to-one mapping between the edges in $E$ and the edges in $E'$ by considering the vertices $v'$ and $v'' \in V'$ as $v \in V$. Three edges in $E$ form a triangle in $G$ if and only if the three corresponding edges in $E'$ form a triangle in $G'$ since there is no triangle in $G$ contains an edge in $E_1$ and an edge in $E_2$.
          Thus, an edge triangle packing of size $k$ (resp., an edge triangle covering of size $k$) in $G$ is also an edge triangle packing of size $k$ (resp., an edge triangle covering of size $k$) in $G'$.
          This implies that Reduction Rule~\ref{split} is correct for both ETP and ETC.

    We give a simple greedy algorithm to find the edge sets $E_1$ and $E_2$ for a given vertex $v$.
    Initially, let $E_1$ contain an arbitrary edge $e$ incident on $v$. We iteratively perform the following steps until no further updates occur: if there is a triangle containing an edge in $E_1$ and an edge $e'$ incident on $v$ but not in $E_1$, then add edge $e'$ to $E_1$.
    It is easy to see that all edges in $E_1$ must be in the same part to satisfy the requirement. If $E_1\neq E$, then we can split $E$ to two parts $E_1$ and $E_2=E\setminus E_1$.
    Otherwise, the edges incident on $v$ cannot be split.
\hfill \qed\par\end{proof}

\subsection{Adjustments Based on a Triangle Packing}
After applying the first four rules, our algorithms will find a maximal edge-disjoint triangle packing $S$ by using an arbitrary greedy method. This can be done easily in polynomial time.
The following rules are based on the packing $S$.
From now on, we let $F = V \setminus V (S)$ denote the set of vertices not appearing in $S$ and $R= E\setminus E(S)$ denote the set of edges not appearing in $S$.
We begin with the following trivial rule.

\begin{reduction}\label{packing}
 If $|S|> k$,  for ETP, return `yes' to indicate that the instance is a yes-instance; and for ETC, return `no' to indicate that the instance is a no-instance.
\end{reduction}

The following three rules just update the packing $S$ by replacing some triangles in it and do not change the graph.
Illustrations of the three rules are shown in Figure~\ref{567}.

\begin{figure}[h]
    \centering
    \includegraphics[scale=0.22]{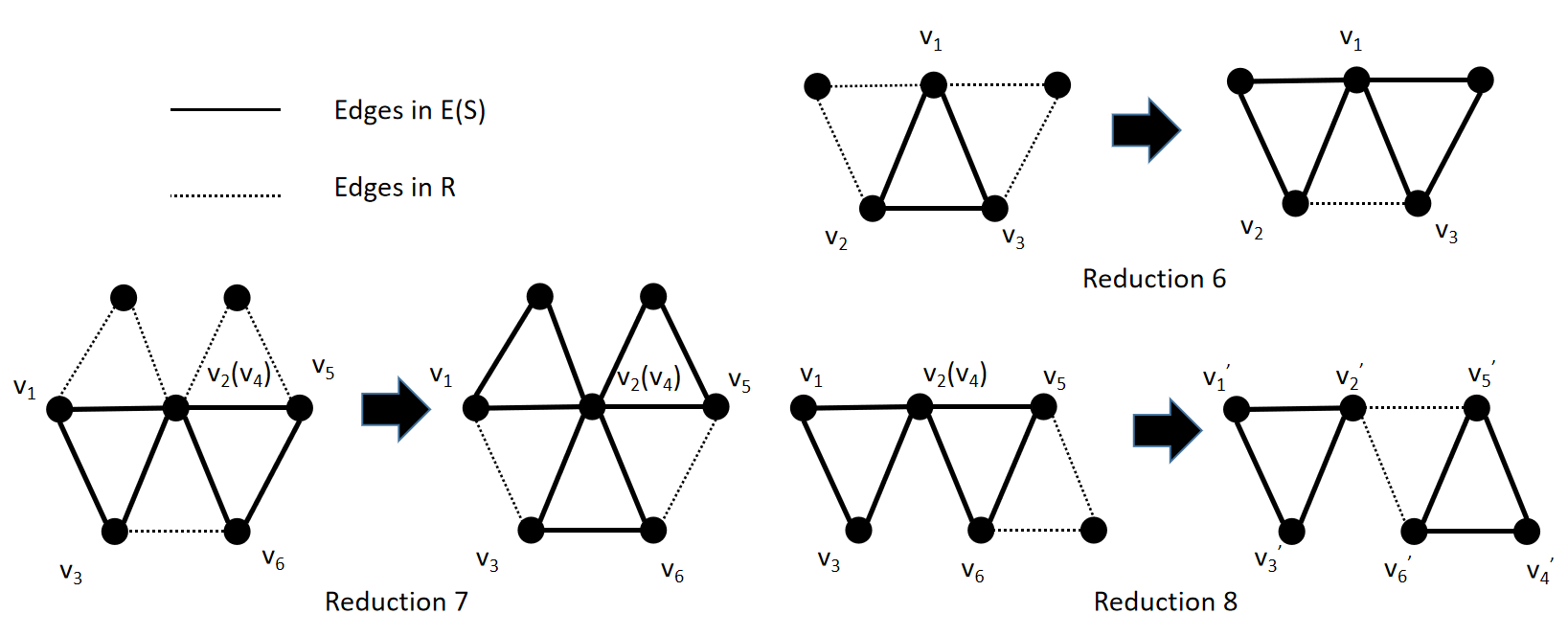}
    \caption{Illustrations of Reduction Rules \ref{increasing}-\ref{increasing3}}
    \label{567}
\end{figure}

\begin{reduction}\label{increasing}
 If there is a triangle $v_1v_2v_3 \in S$ such that there are at least two edge-disjoint triangles in the spanned graph $G[R\cup \{v_1v_2,v_1v_3,v_2v_3\}]$, then replace $v_1v_2v_3$ with these triangles in $S$ to increase the size of $S$ by at least 1.
\end{reduction}

\begin{reduction}\label{increasing2}
 If there are two edge-disjoint triangles $v_1v_2v_3$ and $v_4v_5v_6\in S$  such that there are at least three edge-disjoint triangles in the spanned graph $G[R\cup \{v_1v_2,v_1v_3,v_2v_3,v_4v_5,v_4v_6,v_5v_6\}]$, then replace $v_1v_2v_3$ and $v_4v_5v_6$ with these triangles in $S$ to increase the size of $S$ by at least 1.
\end{reduction}

\begin{reduction}\label{increasing3}
 If there are two edge-disjoint triangles $v_1v_2v_3$ and $v_4v_5v_6\in S$  such that there are  two edge-disjoint triangles $v'_1v'_2v'_3$ and $v'_4v'_5v'_6$ in the induced graph $G[F\cup \{v_1,v_2,v_3,v_4,v_5,v_6\}]$ such that $|\{v'_1, v'_2, v'_3\} \cup \{v'_4, v'_5, v'_6\}| > |\{v_1, v_2, v_3\}\cup \{v_4, v_5, v_6\}|$ , then replace triangles $v_1v_2v_3$ and $v_4v_5v_6$ with  triangles $v'_1v'_2v'_3$ and $v'_4v'_5v'_6$ in $S$ to increase the number of vertices appearing in $S$ by at least 1.
\end{reduction}
 Note that an application of Reduction Rules \ref{increasing}-\ref{increasing3} will not change the structure of the graph. Thus, the first four reduction rules
 will not be applied after executing Reduction Rules~\ref{increasing}-\ref{increasing3}.

\subsection{A Reduction Based on Fat-head Crown Decomposition}

After Reduction Rule~\ref{increasing3}, we obtain the current triangle packing $S$.
An edge in $E(S)$ is called a \emph{labeled edge} if it is spanned by at least one vertex in $F$.
We let $L$ denote the set of labeled edges.

We can find a fat-head crown decomposition $(C,H,X)$ with $C \subseteq V\setminus V(L)$ and $H\subseteq L$ in polynomial time if it exists by Lemma \ref{FCD_exist}.
Moreover, we will apply the following reduction rule to reduce the graph, the correctness of which is based on Lemma \ref{FCD_correct}.
\begin{reduction}\label{crown}
Use the algorithm in Lemma \ref{FCD_exist} to check whether there is a fat-head crown decomposition $(C, H, X)$ such that $\emptyset \neq C \subseteq V\setminus V(L)$ and $H\subseteq L$. If yes, then delete vertex set $C$ and edge set $H$, and let $k=k-|H|$.
\end{reduction}


An instance is called \emph{reduced} if none of the nine reduction rules can be applied to it. The corresponding graph is also called a reduced graph.

\begin{lemma}\label{time}
    For any input instance, the kernelization algorithms run in polynomial time to output a reduced instance.
\end{lemma}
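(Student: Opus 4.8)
The plan is to argue that each of the nine reduction rules either (i) can be applied at most a polynomially bounded number of times before a rule with smaller index becomes applicable or the algorithm halts, or (ii) can be tested for applicability in polynomial time; together these give a polynomial overall running time. First I would handle the cheap rules. Reduction Rules~\ref{rule0} and \ref{packing} are one-shot termination checks; Reduction Rule~\ref{rule1} can be tested in $O(n^3)$ time (enumerate all triples, mark vertices and edges lying in some triangle) and strictly decreases $|V|+|E|$ each time it fires, so it applies at most $n+m$ times in total. Reduction Rule~\ref{4v} is tested by scanning all $4$-subsets that induce a $K_4$, checking the local triangle condition; each application deletes six edges, so it fires at most $m/6$ times.

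The delicate bookkeeping is around Reduction Rule~\ref{split} and Reduction Rules~\ref{increasing}--\ref{increasing3}. For Reduction Rule~\ref{split}, the greedy partition procedure described in Lemma~\ref{lem_split} runs in polynomial time per vertex, and each successful split increases $|V|$ by one; the key point is that the number of vertices can never exceed roughly $2m$ (each split assigns a nonempty edge set to each new copy, and edges are conserved), so this rule fires only polynomially often as well. For the packing-update rules, I would first invoke Reduction Rule~\ref{packing} to assume $|S| \le k \le m$. Reduction Rules~\ref{increasing} and \ref{increasing2} each strictly increase $|S|$, so across the whole execution they fire at most $m$ times combined; each application requires searching for two (resp.\ three) edge-disjoint triangles inside a graph spanned by $R$ plus three (resp.\ six) fixed edges, which is a bounded-size triangle-packing feasibility question solvable in polynomial time by brute force over the relevant edges. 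Reduction Rule~\ref{increasing3} strictly increases $|V(S)| \le 3|S| \le 3k \le 3m$ and leaves $|S|$ unchanged, and it is tested by enumerating pairs of triangles in $S$ and checking, in the constant-vertex-count induced subgraph $G[F \cup \{v_1,\dots,v_6\}]$ restricted appropriately, whether two vertex-richer edge-disjoint triangles exist — again a constant-size search. Finally, Reduction Rule~\ref{crown} is polynomial-time testable by Lemma~\ref{FCD_exist}, and each application deletes at least one edge (the edge set $H$ is nonempty since $C \neq \emptyset$ forces $|H| = |P| > 0$), so it fires at most $m$ times.

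The main obstacle is ensuring the rules do not cycle: Reduction Rule~\ref{split} \emph{increases} the vertex count, which superficially threatens termination. I would resolve this by choosing the right monotone potential. Observe that each of Reduction Rules~\ref{increasing}--\ref{increasing3} and Reduction Rule~\ref{crown} leaves the graph structure unchanged or only deletes from it, while Reduction Rule~\ref{split} does not create any new triangle (by the one-to-one edge correspondence in Lemma~\ref{lem_split}) and does not decrease any quantity that a later-fired rule increased; conversely, once the structure-changing rules \ref{rule1}, \ref{4v}, \ref{split} can no longer be applied, the remark after Reduction Rule~\ref{increasing3} guarantees they stay inapplicable. Concretely, I would use the lexicographic potential $\Phi = (\,|E|,\ \text{number of triangles in }G,\ 3|S|-|V(S)|,\ -|S|,\ -|V|\,)$, checked rule-by-rule to decrease strictly with each application; since every coordinate is a nonnegative integer bounded by a polynomial in $n+m$ (for the $-|V|$ coordinate, bounded below because $|V| \le 2m$ always holds after splits), the algorithm performs only polynomially many reduction steps, each costing polynomial time, and therefore terminates with a reduced instance in polynomial time. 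This establishes Lemma~\ref{time}.
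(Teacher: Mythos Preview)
Your overall strategy---bound each rule's total number of applications and its per-application cost---is exactly the paper's, and your bounds for Rules~\ref{rule0}--\ref{split}, \ref{packing}, and \ref{crown} are sound (the paper bounds Rules~\ref{4v} and~\ref{crown} by $k$ via the decrement of the parameter rather than by $m$ via edge deletion, but either works). The real gap is in your treatment of Rules~\ref{increasing}--\ref{increasing3}. You claim these fire at most $m$ (resp.\ $3m$) times ``across the whole execution'' because they strictly increase $|S|$ (resp.\ $|V(S)|$), but those quantities are \emph{not} monotone over the whole run: after Rule~\ref{crown} deletes vertices and edges, the algorithm returns to Rules~\ref{rule0}--\ref{split} and then rebuilds $S$ from scratch by a greedy method, so $|S|$ and $|V(S)|$ may drop and Rules~\ref{increasing}--\ref{increasing3} become applicable again. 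The paper closes this by arguing that Rules~\ref{increasing}--\ref{increasing3} run for at most $O(k)$ steps on any fixed graph, and that Rule~\ref{crown} (the only later rule that changes the graph) fires at most $k$ times, giving an $O(k^2)$ total.

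Your lexicographic potential does not rescue the argument and is in fact unnecessary. It fails to decrease under Rule~\ref{increasing}: replacing one triangle by two raises $|S|$ by one while $|V(S)|$ rises by at most the number of new vertices, so the coordinate $3|S|-|V(S)|$ can strictly increase with $|E|$ and the triangle count unchanged. It also fails when Rule~\ref{rule1} deletes an isolated vertex (only $-|V|$ moves, and it goes up), and the $S$-dependent coordinates are not even well-defined across the phases in which $S$ is recomputed. A separate minor slip: $G[F\cup\{v_1,\dots,v_6\}]$ is not a ``constant-vertex-count'' graph since $F$ may be large; the test for Rule~\ref{increasing3} is still polynomial, but because one enumerates polynomially many candidate triangle pairs, not because the subgraph has bounded size. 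Once you replace your Rules~\ref{increasing}--\ref{increasing3} count by the paper's $O(k^2)$ argument, the per-rule bounds you already established in your first two paragraphs sum to a polynomial and no global potential is needed.
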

\begin{proof}
Lemma \ref{FCD_exist} demonstrates that one execution of Reduction Rule \ref{crown} can be performed within polynomial time, while Lemma \ref{lem_split} shows that one execution of Reduction Rule \ref{split} can also be accomplished in polynomial time. Additionally, each of the remaining reduction rules can be easily executed within polynomial time.
Next, we only need to show that each reduction rule can be executed for at most polynomial times to complete our proof. Reduction Rules \ref{rule0} and \ref{packing} can be applied for at most 1 time.
Reduction Rule \ref{rule1} can be applied for at most $m$ times since it will delete at least one edge each time.
Reduction Rule \ref{split} can be applied for at most $2m$ times since it will increase one vertex each time and there are at most $2m$ vertices for the worst case.
Reduction Rules \ref{4v} and \ref{crown} can be applied for at most $k$ times since each of them will decrease $k$ by at least 1.
We consider Reduction Rules \ref{increasing}-\ref{increasing3}. For a fixed graph $G=(V,E)$, Reduction Rules \ref{increasing} and \ref{increasing2} can be continuously executed for at most $k$ times
 and  Reduction Rule \ref{increasing3} can be continuously executed for at most $3k$ times.
 Since Reduction Rule \ref{crown} can be applied for at most $k$ times to change the graph, we know that Reduction Rules \ref{increasing}-\ref{increasing3} can be applied for at most $O(k^2)$ times.
  Thus, the algorithms run in polynomial times.
\hfill \qed\par
\end{proof}

\section{Analysis Based on Discharging}

Next, we use a discharging method to analyze the size of a reduced instance.
Note that there is no significant difference between ETC and ETP in the analysis.
We partition the graph into two parts: one part is the edge-disjoint triangle packing $S$ after applying all the reductions; the other part is the set $F$ of vertices not appearing in $S$.
Before proceeding with the analysis, we will establish some properties that will be utilized.


\begin{lemma}\label{Loe}
    Consider a reduced graph $G=(V,E)$ with triangle packing $S$. For any triangle $uvw\in S$, at most one of $\{uv,vw,uw\}$ is a labeled edge.
\end{lemma}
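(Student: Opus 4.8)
The plan is to argue by contradiction: suppose some triangle $uvw\in S$ has at least two labeled edges, say $uv$ and $uw$. By definition of a labeled edge, there is a vertex $x\in F$ spanning $uv$ (so $xu,xv\in E$ and $xuv$ is a triangle) and a vertex $y\in F$ spanning $uw$ (so $yu,yw\in E$ and $yuw$ is a triangle). The two vertices $x,y$ may or may not be distinct, which gives two cases. In either case I want to exhibit two edge-disjoint triangles living inside the spanned graph $G[R\cup\{uv,uw,vw\}]$, which would make Reduction Rule~\ref{increasing} applicable and contradict the assumption that $G$ is reduced. The key observation is that the edges $xu,xv,yu,yw$ all lie in $R=E\setminus E(S)$ — they cannot be in $E(S)$, because $x$ and $y$ are in $F$ and hence appear in no triangle of $S$, so no edge incident to them can belong to a triangle of $S$.

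First I would handle the case $x\neq y$. Then $xuv$ and $yuw$ share only the vertex $u$ and no edge, so they are already two edge-disjoint triangles; moreover every edge used ($xu,xv,yu,yw$ from $R$, and $uv,uw$ from the triangle) lies in $G[R\cup\{uv,uw,vw\}]$. This is exactly the situation Reduction Rule~\ref{increasing} forbids, contradiction. The slightly more delicate case is $x=y$: a single vertex $x\in F$ spans both $uv$ and $uw$, so $x$ is adjacent to $u$, $v$, and $w$. Here I would use the third edge $vw$ of the triangle together with the edge $xv$ (or $xw$): the triangles $xuv$ and $xvw$ — wait, these share the edge $xv$. Instead, note $x$ together with $\{u,v,w\}$ induces a graph containing edges $xu,xv,xw,uv,uw,vw$, i.e.\ a $K_4$ on $\{x,u,v,w\}$ minus possibly nothing; inside a $K_4$ one can always pick two edge-disjoint triangles (for instance $xuv$ and $xw$-\dots no: in $K_4$ on $\{x,u,v,w\}$ the triangles $uvx$ and $uw$\dots). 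The clean choice: in the $K_4$ on $\{x,u,v,w\}$, take triangles $xvw$ and $xuv$? They share $xv$. Take $uvw$ and $xuv$? share $uv$. The correct pair of edge-disjoint triangles in a $K_4$ is, e.g., $xuv$ and $xw$\dots there is no such pair of two edge-disjoint triangles in $K_4$ (it has only $4$ edges\dots actually $K_4$ has $6$ edges and $4$ triangles, and two of them are edge-disjoint iff they partition into\dots $uvw$ and $xuv$ overlap; $uvw$ and $xuw$ overlap; in fact any two triangles in $K_4$ share an edge). So I cannot get it from the $K_4$ alone.

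So in the case $x=y$ I need to use more of $R$. Here I would instead invoke the earlier reductions more carefully — in particular, since the instance is reduced with respect to Reduction Rule~\ref{4v} and Reduction Rule~\ref{split}, and since every edge is in a triangle (Reduction Rule~\ref{rule1}), the local structure around $\{x,u,v,w\}$ cannot be just this $K_4$. If the six edges $xu,xv,xw,uv,uw,vw$ lay in no triangle other than the four faces of the $K_4$, Reduction Rule~\ref{4v} would have fired. Hence at least one of these edges, say it participates in another triangle through a fifth vertex $z$; chasing which edge and which vertex, I obtain an extra triangle edge-disjoint from one of the $K_4$-faces, giving two edge-disjoint triangles in $G[R\cup\{uv,uw,vw\}]$ and again contradicting Reduction Rule~\ref{increasing}. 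The main obstacle I anticipate is precisely this $x=y$ sub-case: making sure that the "extra" triangle forced by non-applicability of Rule~\ref{4v} actually uses only edges available in the spanned graph $G[R\cup\{uv,uw,vw\}]$ (it might a priori use edges of $E(S)$ belonging to other triangles), so the case analysis on where the fifth vertex attaches, and on whether the edges it brings are in $R$ or in $E(S)$, needs to be done carefully, possibly also leaning on Reduction Rules~\ref{increasing2}--\ref{increasing3} which swap pairs of packing triangles. Once that case is closed, the lemma follows.
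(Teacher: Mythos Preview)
Your plan is correct and matches the paper's proof: the $x\neq y$ case is dispatched exactly as you say via Reduction Rule~\ref{increasing}, and in the $x=y$ case the paper likewise invokes Reduction Rule~\ref{4v} to obtain a fifth vertex and then does the case split you anticipate on whether the new edges lie in $R$ or in $E(S)$. For the record, the paper's case analysis reduces by symmetry to two representative edges of the $K_4$ (one edge of $uvw$ and one edge incident to $x$), and in the sub-cases where the extra edge lands in $E(S)$ it is Reduction Rule~\ref{increasing3} (not~\ref{increasing2}) that closes the argument.
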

\begin{proof}
    Assume to the contrary that there are two edges, say $uv$ and $vw$ are spanned by vertices in $F$. We show some contradiction.

    If edges $uv$ and $vw$ are spanned by two different vertices $x,x'\in F$ respectively, then Reduction Rule \ref{increasing} could be applied (Case 1 in Figure \ref{f3}).
   Therefore, edges $uv$ and $vw$ are spanned by the same vertex $x\in F$.
   Since Reduction Rule \ref{4v} is not applied on the four vertices $\{u,v,w,x\}$, we know that at least one edge in $\{uw,uw,ux,vw,vx,wx\}$ is contained in a triangle other than $uwv,uvx,uwx,$ and $vwx$.
   Due to symmetry, we only need to consider two edges $vw$ and $xw$.

   Assume that edge $vw$ is contained in a triangle $vwy$, where $y\not\in \{u, x\}$. If none of $\{yv,yw\}$ appears in $E(S)$, then Reduction Rule \ref{increasing} could be applied to replace $uvw$ with two triangles $xvu$ and $yvw$ in $S$. If at least one edge in  $\{yv, yw\}$ is contained in $E(S)$, without loss of generality, assume $yw\in E(S)$ and there is a triangle $ywz\in S$.
   For this case, Reduction Rule \ref{increasing3} could be applied to replace $vuw$ and $ywz$ with $xvu$ and $ywz$ (Case 2 in Figure \ref{f3}).

Assume that edge $xw$ is contained in a triangle $xwy$, where $y\not\in \{u, v\}$. By the maximality of $S$, we know that at least one of $\{xy, yw\}$ must appear in $E(S)$.
However, edge $xy$ can not appear in $E(S)$ since $x\in F$. We know that $wy\in E(S)$ and there is a triangle $wyz\in S$. For this case, Reduction Rule \ref{increasing3} could be applied to replace $vuw$ and $ywz$ with  $xvu$ and $wyz$ (Case 3 in Figure \ref{f3}).

In any of these cases, we can find a contradiction to the fact that the graph is reduced.
 \hfill \qed\par
\end{proof}

 \begin{figure}[t]
     \centering
     \includegraphics[scale=0.3]{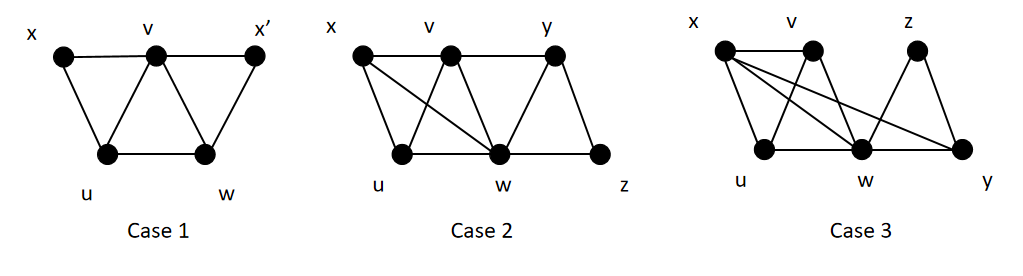}
     \vspace{-3mm}\caption{Three cases in Lemma \ref{Loe}}

     \label{f3}
 \end{figure}

A triangle $uvw\in S$ is \emph{good} if it contains a labeled edge and \emph{bad} otherwise.
By Lemma \ref{Loe}, we know that there is exactly one labeled edge in each good triangle.
We let $G'$ be the graph obtained by deleting the set $L$ of labeled edges from $G$. Consider a good triangle $uvw$ with labeled edge $uv$. If the two edges $vw$ and $wu$ are not in any triangle in $G'$, we call the triangle \emph{excellent}. Otherwise, we call the triangle \emph{pretty-good}.
We let $S_1$ denote the set of excellent triangles, $S_2$ denote the set of pretty-good triangles, and
$S_3$ denotes the set of bad triangles in $S$.
The number of triangles in $S_1$, $S_2$ and $S_3$ are denoted by $k_1,k_2$, and $k_3$, respectively.
Let $V_1=V(S_1)\setminus (V(L)\cup V(S_2)\cup V(S_3))$ and $V_2=V(S_2)\setminus V(L)$.
 See Figure \ref{l2} for an illustration of these concepts.
\begin{figure}[t]
    \centering
    \includegraphics[scale=0.30]{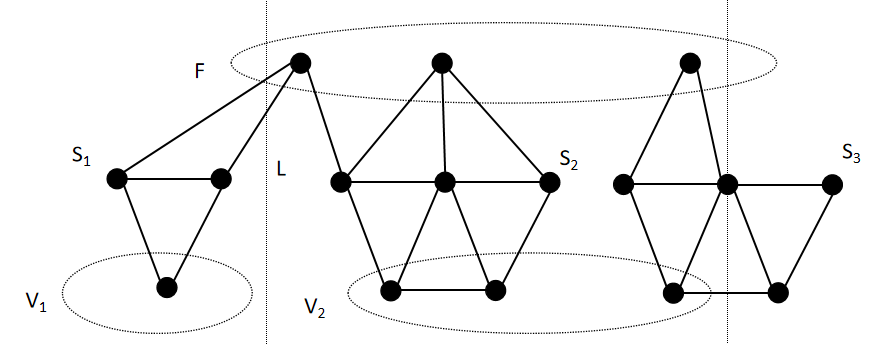}
    \caption{An illustration for triangles and vertices in $G$}
    \label{l2}
\end{figure}

\vspace{-4mm}
\subsection{The analysis}
The discharging method stands as a renowned technique in graph theory, finding its most notable application in the proof of the famous Four Color Theorem.
In this section, we will use the discharging method to analyze the number of vertices present in $S_1$, $S_2$, $S_3$, and $F$.
The idea of the method is as follows.

First, we initially assign some integer values to vertices, edges, and triangles in $S$.
The total value assigned is at most $3k$. Subsequently, we perform steps to update the values, where certain values on vertices, edges, and triangles are transformed into other vertices, edges, and triangles.
In these steps, we never change the structure of the graph and the total value in the graph. After performing these transformations, we demonstrate that each vertex in the graph has a value of at least 1.
Consequently, we conclude that the number of vertices in the graph is at most $3k$. 

\textbf{Initialization:} Assign value 3 to each edge in $L$ and each triangle in $S_3$. Edges not in $L$, vertices, and triangles in $S_1\cup S_2$ are assigned a value of 0.

By Lemma~\ref{Loe}, we know that each of excellent and pretty-good triangles contains exactly one labeled edge in $L$ and each bad triangle in $S_3$ contains no labeled edge. Thus, the total value in the graph
is $3k_1+3k_2+3k_3\leq 3k$.

\textbf{Step 1:} For each labeled edge in $L$, transform a value of 1 to each of its two endpoints; for each triangle in $S_3$, transform a value of 1 to each of its three vertices.

Figure~\ref{fs1} illustrates the transformation process of Step 1. After Step 1, each labeled edge has a value of 1, and all triangles have values of 0.
Note that some vertices may have a value of
 2 or more, as they may serve as endpoints of multiple labeled edges and can also be vertices in $V(S_3)$.
However, vertices in $F\cup V_1\cup V_2$ still retain a value of 0.

\begin{figure}[t]
    \centering
    \includegraphics[scale=0.25]{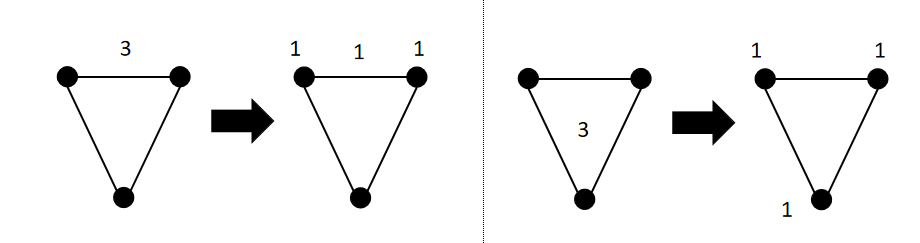}
    \caption{An illustration for Step 1}
    \label{fs1}
\end{figure}

A \emph{triangle component} is a connected component in the graph $H=(V(S),E(S))$. For a vertex $v\in V(S)$, we let $C(v)$ denote the set of vertices in the triangle component which contain $v$.

\textbf{Step 2:} For each triangle component in $G$, we iteratively transform a value of 1 from a vertex with a value of at least 2 to a vertex with a value of 0 in the same triangle component, where vertices in $V_1$ have a higher priority to get the value.
\begin{lemma}\label{a2}
After Step 2, each triangle component has at most one vertex with a value of 0. Moreover,

(i) For any triangle component containing a triangle in $S_3$, each vertex in the triangle component has a value of at least 1;

(ii) For any triangle component containing at least one triangle in $S_2$, if there is a vertex with a value of 0 in the triangle component, then the vertex must be a vertex in $V_2$.

\end{lemma}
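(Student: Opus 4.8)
The plan is to separate the statement into two pieces: the ``at most one value-$0$ vertex per component'' assertion together with (i), which I would obtain from a single counting argument, and (ii), which I would reduce to a purely structural statement using the $V_1$-priority built into Step~2.

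For the first piece, fix a triangle component $\mathcal{T}$ and let $a$ and $b$ be the numbers of good and of bad triangles it contains. By Lemma~\ref{Loe} each good triangle carries exactly one labeled edge, so $\mathcal{T}$ contains exactly $a$ labeled edges, and each of them is an edge of a triangle of $\mathcal{T}$. The $a+b$ triangles of $\mathcal{T}$ are pairwise edge-disjoint, hence linearly independent in the cycle space of $\mathcal{T}$, so the cyclomatic number $|E(\mathcal{T})|-|V(\mathcal{T})|+1$ is at least $a+b$; since $|E(\mathcal{T})|=3(a+b)$ this yields $|V(\mathcal{T})|\le 2(a+b)+1$. On the value side, after Step~1 the labeled edges of $\mathcal{T}$ have donated $2a$ units to its vertices and its bad triangles have donated $3b$ units, and Step~2 only moves value between vertices of the same component, so the vertices of $\mathcal{T}$ always carry exactly $2a+3b$ units. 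Step~2 halts only when no component simultaneously owns a vertex of value $\ge 2$ and a vertex of value $0$; hence at the end either every vertex of $\mathcal{T}$ has value $\ge 1$ or every vertex has value $\le 1$, and in the latter case the number of value-$0$ vertices equals $|V(\mathcal{T})|-(2a+3b)\le (2(a+b)+1)-(2a+3b)=1-b$. So $\mathcal{T}$ has at most one value-$0$ vertex, and none when $b\ge 1$, which is exactly~(i).

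For (ii), I would first observe that a value-$0$ vertex is never an endpoint of a labeled edge nor a vertex of a bad triangle (Step~1 hands value to both), so, together with the count above, any component carrying a value-$0$ vertex has $b=0$ and its unique value-$0$ vertex $v$ lies in $(V(S_1)\cup V(S_2))\setminus(V(L)\cup V(S_3))=V_1\cup V_2$. It then remains to exclude $v\in V_1$ when $\mathcal{T}$ contains a pretty-good triangle, and this is where the priority enters: examining, immediately after Step~1, the number $n_1$ of value-$0$ vertices of $\mathcal{T}$ lying in $V_1$, the number $n_0$ of those lying outside $V_1$, and the total surplus $\sum_{w}(\mathrm{val}(w)-1)$, a short case analysis shows that if the value-$0$ vertex surviving Step~2 lies in $V_1$ then necessarily $n_0=0$. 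Since every vertex of $V_2$ contained in $\mathcal{T}$ is value-$0$ after Step~1 (it is not on a labeled edge and $b=0$) and lies outside $V_1$, this reduces (ii) to the structural claim: every triangle component with no bad triangle that contains a pretty-good triangle also contains a vertex of $V_2$.

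Proving that claim is the step I expect to be the main obstacle. The approach: take a pretty-good triangle $P=p_1p_2p_3$ of $\mathcal{T}$ with labeled edge $p_1p_2$, so $p_1,p_2\in V(L)$ and $p_3$ is the apex; if $p_3\notin V(L)$ then $p_3\in V(S_2)\setminus V(L)=V_2$ and we are done, so assume $p_3\in V(L)$. By pretty-goodness a non-labeled edge of $P$, say $p_2p_3$, lies in a triangle $p_2p_3y$ of $G'$; Reduction Rule~\ref{increasing} applied to $P$ forces at least one of $p_2y,p_3y$ into $E(S)$, so $y$ lies in $\mathcal{T}$, and the triangle $Q\in S$ through that edge is again pretty-good, because the edge in question, lying in a $G'$-triangle, cannot be the unique labeled edge of $Q$ and $Q$ is not bad. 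Hence $y\in V(S_2)$, and if $y\notin V(L)$ we are done. Otherwise the argument must show this process terminates at a $V_2$-vertex: when the witness edge of a pretty-good triangle already lies in $E(S)$, one can usually combine $P$, $Q$, the $G'$-triangle $p_2p_3y$, and the two labeled-edge triangles of $P$ and $Q$ to trigger Reduction Rule~\ref{increasing2}, a contradiction; the stubborn configurations --- where the witness triangle straddles three distinct triangles of $S$, or where a chain of labeled edges through apices closes up --- are what Reduction Rules~\ref{split} (no vertex has its incident edges split into two triangle-disconnected halves) and~\ref{increasing3} are meant to kill. I expect the careful tracking of which witness edges lie in $E(S)$ versus $R$, and the identification of exactly which replacement rule finishes each residual case, to be the genuinely technical part.
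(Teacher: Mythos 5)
Your counting argument for the ``at most one value-$0$ vertex'' assertion and for part (i) is correct and is essentially the paper's own proof: the paper likewise bounds a connected component of $x$ edge-disjoint triangles by $2x+1$ vertices, computes the total value injected by Step~1 as $2x_1+3x_2$, and concludes that at most one vertex can be left at value $0$, and none when $x_2\ge 1$. Your cycle-space derivation of the $2x+1$ bound and your explicit remark that Step~2 halts only when a component lacks either a donor or a recipient are finer-grained than the paper's wording, but they amount to the same argument.

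The gap is in part (ii). You correctly observe that the $V_1$-priority only forces the surviving value-$0$ vertex into $V_2$ when the component actually contains, after Step~1, a value-$0$ vertex outside $V_1$, and you correctly reduce (ii) to the structural claim that a triangle component containing a pretty-good triangle but no bad triangle must contain a vertex of $V_2$. (The paper itself dispatches (ii) with the single sentence that $V_1$ has higher priority, so you have isolated a point the paper does not argue explicitly.) But you do not prove that structural claim. Your chain argument shows only that from a pretty-good triangle whose apex lies in $V(L)$ one reaches, via a triangle of $G-L$ and Reduction Rule~\ref{increasing}, another pretty-good triangle of the same component; you supply no termination or no-cycle argument, and you explicitly defer the ``stubborn configurations'' to an unspecified case analysis invoking Reduction Rules~\ref{4v}, \ref{increasing2}, \ref{increasing3} and~\ref{split}. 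As written, the proposal therefore establishes (ii) only in the easy case where some pretty-good triangle already has a vertex outside $V(L)$. Since Lemma~\ref{v1l} later uses (ii) precisely in the form ``a component whose value-$0$ vertex lies in $V_1$ contains no triangle of $S_2$,'' this missing piece is not cosmetic: the proof of (ii) is incomplete until the structural claim, or some substitute for it, is actually carried out.
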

\begin{proof}
Let $Q$ be a triangle component with $x$ triangles. Since $Q$ is connected, it contains at most $2x+1$ vertices.
Assume that among the $x$ triangles, there are $x_1$ triangles in $S_1\cup S_2$ and $x_2$ triangles in $S_3$, where $x_1+x_2=x$.
By the definition, we know that each triangle in $S_1\cup S_2$ contains a distinct labeled edge. According to the initialization of the assignment,
we know that the total value is $2x_1+3x_2=2x+x_2$. It always holds that $2x+1\leq (2x+x_2)+1$, and $2x+1\leq 2x+x_2$ when $x_2\geq 1$.
Thus, $Q$ has at most one vertex with a value of 0. When $Q$ contains some triangles from $S_3$, i.e., $x_2\geq 1$, all vertices in $Q$ will get a value of at least 1.
The statement (ii) holds because vertices in $V_1$ have a higher priority to receive the value in Step 2.
\hfill \qed\par
\end{proof}

After Step 2, only vertices in $F$, some vertices in $V_1$, and some vertices in $V_2$ have values of 0. We use the following lemma to transform some values to vertices in $V_2$ with a value of 0.

\begin{lemma}\label{ano}
 Consider two triangles $vuw$ and $vxy\in S_2$ sharing a common vertex $v$, where $uv$ and $vx\in L$. If there is an edge $wy\in E$, then $uv$ and $vx$ are spanned by exactly one vertex in $F\cup V_1\setminus C(v)$.
\end{lemma}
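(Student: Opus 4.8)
The plan is to argue by contradiction, the only tools being the three packing‑replacement rules: whenever we exhibit more edge‑disjoint triangles than ``should'' fit on a prescribed small edge set (Reduction Rules~\ref{increasing}--\ref{increasing2}) or more vertices than ``should'' appear on a prescribed small vertex set together with $F$ (Reduction Rule~\ref{increasing3}), that rule would fire, contradicting that the instance is reduced. Since $uv,vx\in L$, there are vertices $a,b\in F$ spanning $uv$ and $vx$ respectively. Because $a\in F$ and $C(v)\subseteq V(S)$, we have $a\in F\cup V_1\setminus C(v)$, so the assertion amounts to two things: $a=b$, and no vertex of $F\cup V_1\setminus C(v)$ other than $a$ spans $uv$ or $vx$. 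Both will follow from one claim: \emph{if $p,p'\in F\cup V_1\setminus C(v)$ are distinct with $p$ spanning $uv$ and $p'$ spanning $vx$, then the instance is not reduced.} One auxiliary fact is needed first, namely $wy\in R$, since the triangle $vwy$ must live inside the spanned subgraph used by Reduction Rule~\ref{increasing2}. Throughout I will repeatedly use that the triangles of $S$ are edge‑disjoint to derive distinctness relations among the vertices involved (for instance $u,v,w,x,y$ are pairwise distinct, and $a,b\notin V(S)$).

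To see $wy\in R$: assume instead $wy\in E(S)$, so $wy$ lies in a triangle $wyz\in S$; a short case check shows $z\notin\{u,v,x\}$ (e.g.\ $z=u$ would force $wyz$ and $vuw$ to share the edge $uw$), so $\{u,v,w,y,z\}$ consists of five distinct vertices. Apply Reduction Rule~\ref{increasing3} to the two triangles $vuw,wyz\in S$: the triangles $auv$ and $wyz$ are edge‑disjoint, both lie in $G[F\cup\{u,v,w,y,z\}]$ (recall $a\in F$), and they span the six vertices $\{a,u,v,w,y,z\}$, strictly more than five. So Reduction Rule~\ref{increasing3} applies, a contradiction; hence $wy\notin E(S)$, i.e.\ $wy\in R$.

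Now I prove the claim. Let $p,p'\in F\cup V_1\setminus C(v)$ be distinct with $p$ spanning $uv$ and $p'$ spanning $vx$. Since $u,v,x\in C(v)$ but $p,p'\notin C(v)$, none of the edges $pu,pv,p'v,p'x$ belongs to $E(S)$, hence all of them lie in $R$; together with $wy\in R$ this shows the three triangles $puv$, $p'vx$, and $vwy$ all lie in the spanned graph $G[R\cup\{uv,uw,vw,vx,vy,xy\}]$. A routine check (using $p\neq p'$ and $p,p'\notin\{u,v,w,x,y\}$) verifies these three triangles are pairwise edge‑disjoint. Hence Reduction Rule~\ref{increasing2}, applied to the edge‑disjoint triangles $vuw,vxy\in S$, would fire, contradicting reducedness; this proves the claim. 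Applying the claim with $(p,p')=(a,b)$ forces $a=b$; and applying it with $(p,p')=(a',a)$, respectively $(a,a')$, rules out any further vertex $a'\neq a$ of $F\cup V_1\setminus C(v)$ spanning $uv$, respectively $vx$. Therefore $a=b$ is the unique vertex of $F\cup V_1\setminus C(v)$ spanning both $uv$ and $vx$, which is the assertion.

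The main obstacle is purely organizational rather than conceptual: in each of the three rule applications one must check that the constructed triangles ($auv$, $wyz$, $puv$, $p'vx$, $vwy$) are genuine triangles of $G$, that each collection is pairwise edge‑disjoint, and that it sits inside the exact spanned or induced subgraph named in the corresponding rule — and all of this reduces to a somewhat tedious list of ``these two vertices are distinct'' statements, every one of which comes from edge‑disjointness of $S$ or from the relation between a vertex's incident edges, $E(S)$, and the triangle component $C(v)$. A minor point to settle is the intended reading of ``spanned by exactly one vertex'': I take the conclusion to mean that there is a unique vertex in $F\cup V_1\setminus C(v)$ spanning both $uv$ and $vx$, which is precisely what the claim together with the existence of $a$ provide.
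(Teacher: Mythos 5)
Your proposal is correct and follows essentially the same route as the paper: assume two distinct spanning vertices $q_1,q_2\in (F\cup V_1)\setminus C(v)$ for $uv$ and $vx$, observe that none of $q_1u,q_1v,q_2v,q_2x$ lies in $E(S)$, and derive a contradiction by applying Reduction Rule~\ref{increasing2} with the three edge-disjoint triangles $q_1uv$, $q_2vx$, $vwy$. Your additional step establishing $wy\in R$ via Reduction Rule~\ref{increasing3} is a worthwhile refinement: the paper's proof implicitly needs $wy\notin E(S)$ for $vwy$ to lie in the spanned graph $G[R\cup\{uv,uw,vw,vx,vy,xy\}]$ but does not justify it.
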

\begin{figure}[h]
    \centering
    \includegraphics[scale=0.3]{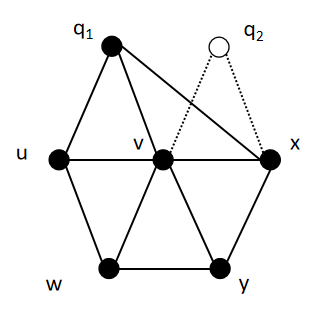}
    \caption{An illustration for Lemma \ref{ano}}
    \label{fs2}
\end{figure}

\begin{proof}
Since $vu$ and $vx$ are labeled edges in $L$, we know that each of $vu$ and $vx$ is spanned by at least one vertex in $F$.
Assume to the contrary that there are two different vertices $q_1$ and $q_2\in F\cup V_1\setminus C(v)$ that span edges $uv$ and $vx$, respectively. We show some contradiction. First, none edge in $\{q_1u,q_1v,q_2v,q_2x\}$ belongs to $E(S)$ since $q_1$ and $q_2$ are not part of $C(v)$.
Thus, Reduction Rule \ref{increasing2} could be applied to replace triangles $uvw$ and $vxy$ with triangles $q_1uv, q_2vx$, and $vwy$, a contradiction to the factor that the graph is reduced. See Figure~\ref{fs2} for an illustration of the proof.
\hfill \qed\par
\end{proof}

\textbf{Step 3:}
If there are two triangles $vuw$ and $vxy\in S_2$ sharing a common vertex $v$ such that $uv$ and $vx$ are two labeled edges in $L$ and there is an edge $wy\in E$, we transform a value of 1 from edge $vx$ to the unique vertex $q_1\in F$ spanning $vx$ and transform a value of 1 from edge $uv$ to the vertex with a value of 0 in $C(v)$ if this vertex exists.
\begin{figure}[t]
    \centering
    \includegraphics[scale=0.30]{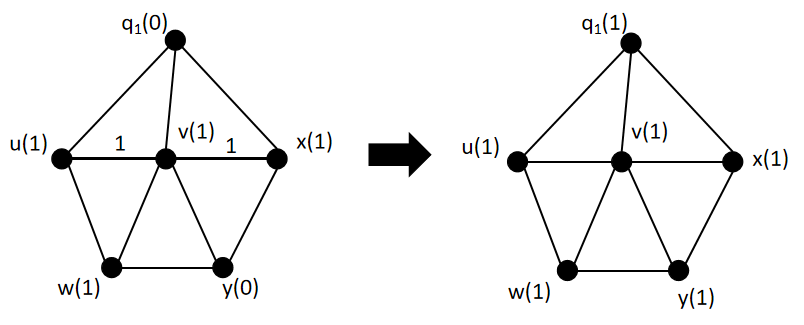}
    \caption{An illustration for Step 3, where the number in parentheses next to each vertex represents the value of that vertex}
    \label{fs3}
\end{figure}

See Figure~\ref{fs3} for an illustration of Step 3. We have the following property.
\begin{figure}[t]
    \centering
    \includegraphics[scale=0.18]{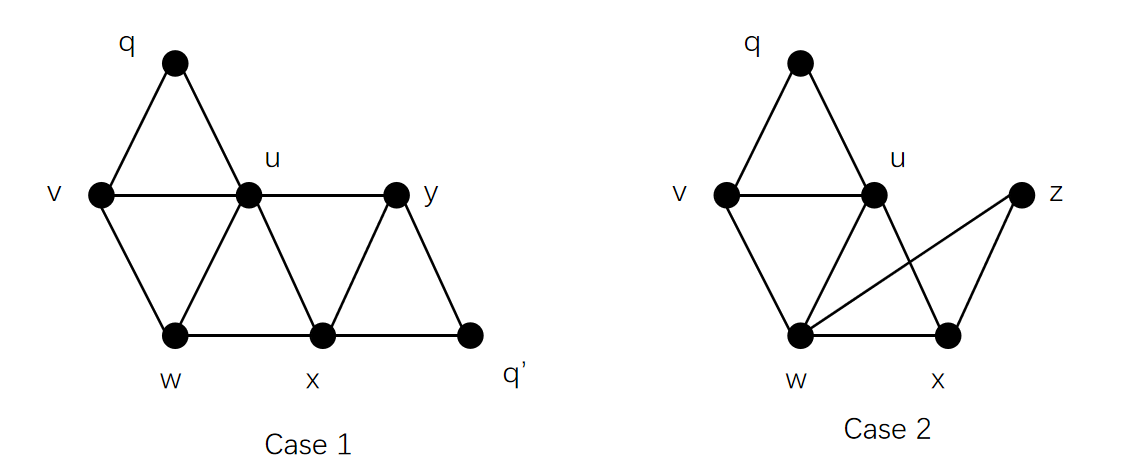}
    \caption{An illustration for Lemma \ref{nov2}}
    \label{fl8}
\end{figure}
\begin{lemma}\label{nov2}
Every vertex in $V_2$ has a value of at least 1 after Step 3.
\end{lemma}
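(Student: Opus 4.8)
The plan is to track how much value each vertex in $V_2$ receives across Steps 1--3 and show it ends up at least $1$. Recall that a vertex $v\in V_2$ lies in $V(S_2)\setminus V(L)$, so $v$ is a non-labeled-edge vertex of some pretty-good triangle $vuw$ with labeled edge $uw$. By Lemma~\ref{Loe} that labeled edge is unique in its triangle, and since $v\notin V(L)$, $v$ receives nothing directly in Step~1. Thus after Step~1 the value of $v$ is $0$ unless $v$ is also an endpoint of a labeled edge of another triangle in $v$'s triangle component, or $v\in V(S_3)$ --- but the latter is excluded because $V_2\cap V(S_3)=\emptyset$ by construction, and in the former case $v$ would already have value $\ge 1$ and we are done. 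So the interesting case is a vertex $v\in V_2$ with value $0$ after Step~1.

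Next I would invoke Lemma~\ref{a2}(ii): after Step~2, the only vertices of value $0$ in a triangle component containing a triangle of $S_2$ are vertices of $V_2$, and each such component has at most one zero-vertex. So if $v\in V_2$ still has value $0$ after Step~2, then $v$ is the unique zero-vertex of its triangle component $C(v)$, and in particular every \emph{other} vertex of $C(v)$ has value $\ge 1$; moreover that component received exactly enough value in the initialization that it is "tight" (the count $2x+1=(2x+x_2)+1$ with $x_2=0$ in Lemma~\ref{a2}'s proof). The goal is then to show Step~3 delivers a unit of value to $v$. For this I would argue that the pretty-good triangle $vuw$ containing $v$ (with $uw$ labeled --- let me write the labeled edge incident to one endpoint, matching the lemma's notation $uv\in L$) must, because the triangle is pretty-good and not excellent, have one of its non-labeled edges, say $vw$, lying in a triangle of $G'=G-L$; call that triangle $vwy'$. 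I then need to produce the configuration of Step~3: a second triangle $vxy\in S_2$ sharing vertex $v$ with $vx\in L$ and an edge $wy\in E$. The edge $wy'$ from the $G'$-triangle is a candidate for the "$wy$" edge, so I would show the triangle of $S$ through $vy'$ (which exists because, by maximality of $S$ and $v\notin F$, the relevant edges must lie in $E(S)$, mirroring the argument in Lemma~\ref{Loe}) is itself in $S_2$ and supplies the second labeled edge at $v$. Once the Step~3 hypotheses are met, Step~3 explicitly transfers a value of $1$ from the labeled edge $uv$ to "the vertex with value $0$ in $C(v)$ if it exists" --- and by the tightness from Lemma~\ref{a2} that zero-vertex is exactly $v$. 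Hence $v$ gains $1$ and finishes with value $\ge 1$.

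The main obstacle I anticipate is the case analysis needed to guarantee that the Step~3 configuration actually materializes around $v$: specifically, showing that the "pretty-good but not excellent" status forces a second $S_2$-triangle at $v$ carrying a labeled edge, rather than, say, a $G'$-triangle that reenters the same triangle $vuw$ or uses a vertex of $F$ in a way that breaks the pattern. I would handle this by carefully using the definitions of excellent/pretty-good (a non-labeled edge of the triangle being spanned inside $G'$), the maximality of $S$ (any triangle through $v$ not in $S$ forces at least one of its edges into $E(S)$, and edges to $F$-vertices cannot be in $E(S)$), and Lemma~\ref{ano} to pin down the spanning vertex and the ownership of the transferred value; Lemma~\ref{ano} also ensures the value pulled from $vx$ goes to a vertex in $F$, so it does not collide with the value pulled from $uv$ that is destined for $v$. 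Subtle bookkeeping --- making sure a single labeled edge is not asked to give away more than its value of $1$, and that $v$ is not simultaneously claimed by two different Step~3 applications --- will need a short argument that each labeled edge participates as the "$uv$" role at most once per its incident triangle-pair, which follows from uniqueness of the zero-vertex per component in Lemma~\ref{a2}.
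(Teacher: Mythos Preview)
Your high-level plan---show that Step~3 eventually feeds a unit of value to the unique zero-vertex of the component---is the right instinct, but the concrete argument cannot work as written because you place the zero-vertex in the wrong role inside the Step~3 configuration. In Step~3 the shared vertex (there called $v$) is incident to \emph{two} labeled edges $uv,vx\in L$; hence that shared vertex lies in $V(L)$. But your zero-vertex is in $V_2=V(S_2)\setminus V(L)$ by definition, so it can never serve as the shared vertex. Your attempt to ``produce \ldots\ a second triangle $vxy\in S_2$ sharing vertex $v$ with $vx\in L$'' (with $v$ your zero-vertex) is therefore impossible, and the notation switch ``matching the lemma's notation $uv\in L$'' already signals the clash: once $uv\in L$, your $v$ is no longer in $V_2$.

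The paper fixes this by making one of the \emph{labeled-edge endpoints} the shared vertex. Using the paper's names: the zero-vertex is $w$, the pretty-good triangle is $wuv$ with labeled edge $uv$, and the $G-L$ triangle through a non-labeled edge is $uwx$. One must then argue that some edge of $uwx$ lies in $E(S)$ (otherwise Reduction Rule~\ref{increasing} applies to $uvw$ via $quv$ and $uwx$), and do a genuine case split: if $wx\in E(S)$, Reduction Rule~\ref{increasing3} (or Lemma~\ref{Loe}) yields a contradiction; if $ux\in E(S)$ with triangle $uxy\in S$, then either $xy\in L$ (again Reduction Rule~\ref{increasing3} fires) or $uy\in L$, and only in this last sub-case does the Step~3 pattern appear, with shared vertex $u$, triangles $uvw,uxy\in S_2$, labeled edges $uv,uy$, and the extra edge $wx$. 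Step~3 then transfers $1$ from $uv$ to the unique zero-vertex of $C(u)$, namely $w$. So the missing ideas are: (i) the shared vertex is a labeled-edge endpoint, not the zero-vertex; and (ii) the reduction rules are needed to prune the sub-cases in which the Step~3 pattern does \emph{not} arise---your proposal treats these as ``subtle bookkeeping'' but they are the substance of the proof.
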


\begin{proof}
Assume to the contrary there is a vertex $w\in V_2$ with a value of 0. We know that all vertices in $C(w)\setminus \{w\}$ have a value of at least 1 by Lemma \ref{a2}.
Let $wuv\in S_2$ be the triangle containing $w$, where $uv$ is the labeled edge spanning by a vertex $q\in F$. As shown in Figure \ref{fl8}. At least one of $uw$ and $vw$ is in a triangle in graph $G-L$ by the definition of $S_2$. Without loss of generality, we assume that $uw$ is contained in a triangle $uwx$, where $x\neq v$.
At least one of $ux$ and $wx$ is contained in a triangle in $S$ otherwise Reduction Rule \ref{increasing} could be applied.

\textbf{Case 1}: Edge $ux$ is contained in a triangle $uxy\in S$. See Case 1 in Figure \ref{fl8}. The triangle $uxy$ is not in $S_3$ otherwise there is a contradiction that $w$ would have a value of at least 1 by Lemma~\ref{a2}. Thus, triangle $uxy$ must be in $S_1\cup S_2$ and there is exactly one of $ux$, $uy$, and $xy$ is a labeled edge. Edge $ux$ would not be a labeled edge since triangle $uwx$ is contained in $G-L$.
If $xy$ is the labeled edge that is spanned by a vertex $q'\in F$, then
Reduction Rule \ref{increasing3} could be applied to replace triangles $uvw$ and $uxy$ with triangles $uvw$ and $xyq'$, a contradiction to the factor that the graph is reduced.
If $uy$ is the labeled edge, then triangle $uxy\in S_2$ since $ux$ is contained in a triangle $uxw\in G-L$.
For this case, the vertex $w$ would have a value of at least 1 by Lemma~\ref{a2}. We can always find a contradiction.

\textbf{Case 2}: Edge $wx$ is contained in a triangle $wxz\in S$. See Case 2 in Figure \ref{fl8}. If $z\neq q$, then Reduction Rule \ref{increasing3} could be applied to replace triangles $uvw$ and $wxz$ with triangles $wxz$ and $qvu$, a contradiction to the factor that the graph is reduced. If $z=q$, then at least two edges in triangle $vuw$ are spanned by vertices in $F$, a contradiction to Lemma~\ref{Loe}.

In either case, a contradiction is reached, which implies that the assumption of a vertex $w \in V_2$ having a value of 0 is incorrect. Therefore, every vertex in $V_2$ has a value of at least 1 after Step 3.
\hfill \qed\par
\end{proof}
After Step 3, all vertices with a value of 0 are in either $F$ or $V_1$.
We let $V_1'$ denote the set of vertices with a value of 0 in $V_1$, $F'$ denote the set of vertices with a value of 0 in $F$, and $L'$ denote the set of edges with a value of 1 in $L$ after Step 3.
We give more properties.

\begin{lemma}\label{independent}
Set $F'\cup V'_1$ is an independent set.
\end{lemma}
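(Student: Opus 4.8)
The plan is to assume for contradiction that there exist two vertices $a,b \in F'\cup V_1'$ with $ab\in E$, and to derive a contradiction by showing that some reduction rule still applies. Since every edge lies in a triangle (Reduction Rule~\ref{rule1}), $ab$ is spanned by some vertex $c$, giving a triangle $abc$. The first thing I would do is record where such an edge $ab$ can live: by construction, both $a$ and $b$ have value $0$ after Step~3, so in particular neither is an endpoint of a labeled edge, neither lies in a triangle component containing a triangle of $S_3$ (Lemma~\ref{a2}(i)), and if either is in $V_1$ it belongs to an excellent triangle whose two non-labeled edges lie in no triangle of $G-L$.

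Next I would case-split on where the triangle $abc$ sits relative to the packing $S$. Because $S$ is maximal, $abc$ cannot be edge-disjoint from all triangles of $S$, so at least one of $ab, ac, bc$ lies in $E(S)$. The edge $ab$ itself cannot be in $E(S)$: if $a\in V_1$ then $a$'s triangle in $S$ is excellent with $ab$ a non-labeled edge, but then $abc$ would be a triangle of $G-L$ through that edge, contradicting excellence (a similar/easier argument rules out $ab\in E(S)$ when $a\in F$, since $F$-vertices are in no triangle of $S$). So one of $ac,bc$ is in $E(S)$; say $ac$ lies in a triangle $T=acd \in S$. Now $T$ cannot be bad: if $T\in S_3$, then $a$ lies in a triangle component containing an $S_3$-triangle, so by Lemma~\ref{a2}(i) $a$ would have value at least $1$, contradicting $a\in F'\cup V_1'$. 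Hence $T$ has a labeled edge; since $ac$ is in the triangle $abc \subseteq G-L$, that labeled edge is $ad$ or $cd$, and by the excellence/pretty-good distinction I would pin down that $T$ is actually pretty-good (if $T$ were excellent, its non-labeled edge $ac$ would lie in no triangle of $G-L$, again contradicting the triangle $abc$). So $T\in S_2$. At this point I can apply an increasing-type rule: $abc$ is a triangle using only the edge $ac\in E(S)$ (its other two edges are in $R$, since $ab,bc\notin E(S)$), plus the structure around $T$ — exactly the configuration that lets Reduction Rule~\ref{increasing} replace $T$ by two edge-disjoint triangles (one being $abc$, one being the triangle witnessing that $T$ is pretty-good through its other non-labeled edge), or, if $b$ itself sits in a triangle of $S$ forcing a collision, Reduction Rule~\ref{increasing3} to increase the vertex count. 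Either way the instance is not reduced, a contradiction.

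The main obstacle I anticipate is the bookkeeping in the last case: when $ac\in E(S)$ via $T\in S_2$, I must carefully track which edges around $T$ and around the second non-labeled edge of $T$ are in $E(S)$ versus $R$, and whether $b$ or $c$ coincide with vertices of $S$, in order to name the precise rule (\ref{increasing}, \ref{increasing2}, or \ref{increasing3}) that fires — analogous to the case analysis in the proof of Lemma~\ref{nov2}. I would organize this exactly as in Lemma~\ref{nov2}: subcase on whether the ``third'' edge of the relevant auxiliary triangle lies in $E(S)$, and in the $E(S)$ subcase on the label position, reducing each branch to an application of Reduction Rule~\ref{increasing} or \ref{increasing3}. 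The symmetry $a\leftrightarrow b$ and the symmetry among the two non-labeled edges of a pretty-good triangle should keep the number of genuinely distinct branches small.
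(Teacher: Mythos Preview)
Your approach is workable but considerably more involved than the paper's, and it contains one concrete slip. The paper proves the stronger claim that all of $F\cup V_1$ (not just $F'\cup V_1'$) is independent, and it does so purely from the definitions, with no appeal to Lemma~\ref{a2} or to Reduction Rules~\ref{increasing}--\ref{increasing3}. The key point you are not exploiting is that $V_1 = V(S_1)\setminus(V(L)\cup V(S_2)\cup V(S_3))$ by definition. Once $ac\in E(S)$ forces $a\in V_1$, \emph{every} triangle of $S$ containing $a$ must lie in $S_1$, since $a\notin V(S_2)\cup V(S_3)$. Hence your chain of deductions that rules out $T\in S_3$, rules out $T\in S_1$ (via excellence), and concludes ``so $T\in S_2$'' has in fact already reached a contradiction at that line; the entire final stage of your plan --- a Lemma~\ref{nov2}-style case analysis to fire an increasing rule --- is unnecessary. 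The paper's Case~2 is exactly this: with $uw\in E(S)$ and $u\in V_1$, the triangle $uwy\in S$ is forced into $S_1$ directly, and then the non-labeled edge $uw$ lying in the triangle $uvw$ of $G-L$ contradicts excellence.

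Separately, your proposed second triangle for Reduction Rule~\ref{increasing}, ``the triangle witnessing that $T$ is pretty-good through its other non-labeled edge'', need not exist: pretty-good only guarantees that \emph{some} non-labeled edge of $T$ lies in a triangle of $G-L$, and that edge is already $ac$ (via $abc$); there is no guarantee for $ad$. The natural repair would be to use the vertex $q\in F$ spanning the labeled edge $cd$ to form $qcd$ as the second triangle --- but, again, none of this machinery is needed once you use the definition of $V_1$ directly.
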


\begin{proof}
We prove that $F\cup V_1$ is an independent set, which implies $F'\cup V_1'$ is an independent set since $F'\cup V_1'\subseteq F\cup V_1$.
Assume to the contrary that there is an edge $uv$ between two vertices in $F\cup V_1$. There is at least one triangle $uvw\in G$ containing $uv$ since Reduction Rule \ref{rule1} has been applied.
At least one of $uv, vw$, and $uw$ must be in $E(S)$ by the maximality of $S$.

If $uv\in E(S)$, we let $uvx$ be the triangle in $S$ containing $uv$. First, we know that $u$ and $v\in V_1$ since $F\cap V(S)=\emptyset$.
By the definition of $V_1$,
we get that none of $u$ and $v$ is contained in a triangle in $S_3$ and none of $u$ and $v$ is an endpoint of a labeled edge. Thus, triangle $uvx$ is not a triangle in $S_3$ and it does not contain any labeled edge and then it is not a triangle in $S_1\cup S_2$, which implies triangle $uvx$ is not in $S$, a contradiction.

Otherwise, one of $uw$ and $vw$, say $uw$, is contained in $E(S)$.
Let $uwy$ be the triangle in $S$ containing $uw$. We also have that $u\in V_1$ since $F\cap V(S)=\emptyset$.
By the definition of $V_1$, we know that $u$ is not a vertex in a triangle in $S_3$, and then $uwy$ is not a triangle in $S_3$.
Thus, triangle $uwy$ can only be in $S_1\cup S_2$. Note that none of $uv$, $vw$ and $uw$ can be a labeled edge since $u$ and $v\in F\cup V_1$.
Thus, edge $uw$ is still in a triangle $uvw$ in $G-L$, and then $uwy$ can not be a triangle in $S_1$. However, triangle $uvw$ can not be a triangle in $S_2$ too since $u$ is a vertex in $V_1$.
We also get a contradiction that triangle $uvw$ is not in $S$.

Hence, we have shown that no edge exists between any two vertices in $F' \cup V'_1$, which proves that $F' \cup V'_1$ forms an independent set.
\hfill \qed\par\end{proof}

\begin{lemma}\label{fl}
 Vertices in $F'$ only span edges in $L'$.
\end{lemma}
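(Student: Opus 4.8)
The plan is to argue by contradiction: suppose some vertex $q \in F'$ spans an edge $e \notin L'$. Since $q$ spans $e$, there is a triangle containing $q$ and $e$, so $e$ lies in a triangle, and by maximality of $S$ at least one of the three edges of this triangle is in $E(S)$. The edge $e$ is spanned by a vertex of $F$, hence by the definition of labeled edges, if $e \in E(S)$ then $e \in L$; but $e \notin L'$ means $e$ is either not in $L$ at all, or $e \in L \setminus L'$, i.e.\ a labeled edge whose value was reduced to $0$ during Step~3. I would handle these possibilities in turn.

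First I would rule out $e \notin L$. If $e$ is spanned by $q\in F$ and $e \in E(S)$, then $e$ is by definition a labeled edge, contradiction; so $e \notin E(S)$. Writing the triangle as $q a b$ with $e = ab$, maximality of $S$ forces one of $qa$, $qb$ to be in $E(S)$ — but $q \in F = V\setminus V(S)$, so neither can be in $E(S)$, a contradiction. Hence $e$ must be a labeled edge, $e \in L \setminus L'$.

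Next I would analyze the case $e \in L \setminus L'$: the value on $e$ was transferred away in Step~3. By the description of Step~3, the only labeled edge that loses its value in a given application is the edge $vx$ (transferred to the unique $q_1 \in F$ spanning $vx$), in a configuration where $vuw, vxy \in S_2$ share vertex $v$, $uv, vx \in L$, and $wy \in E$. So $e = vx$ for such a configuration, and the value went to the unique vertex of $F$ spanning $vx$, which — since $q$ also spans $e = vx$ and $q \in F$ — must be $q$ itself by the uniqueness guaranteed in Lemma~\ref{ano}. But then $q$ received a value of $1$ in Step~3, so $q$ has value at least $1$, contradicting $q \in F'$. This is the step I expect to be the crux: I must be careful that Step~3 is well-defined (each labeled edge loses value in at most one application) and that the ``unique vertex in $F$ spanning $vx$'' from Lemma~\ref{ano} is genuinely forced to coincide with $q$; Lemma~\ref{ano} gives uniqueness of the spanning vertex within $F \cup V_1 \setminus C(v)$, and since $q \in F$ and $q \notin C(v)$ (as $C(v) \subseteq V(S)$ and $q \notin V(S)$), $q$ lies in that set, so it must be the one that received the charge. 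With all cases exhausted, every vertex of $F'$ spans only edges in $L'$.
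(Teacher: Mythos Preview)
Your approach is the paper's, and the overall structure is correct. There is one small oversight, though: in Step~3 it is not only the edge $vx$ that may lose its value. The edge $uv$ also transfers its value of $1$ (to the vertex of value $0$ in $C(v)$, if such a vertex exists), so a labeled edge $e \in L\setminus L'$ could have played the role of $uv$ rather than $vx$ in some application of Step~3. Your case analysis must therefore cover $e = uv$ as well. The fix is immediate: by Lemma~\ref{ano} the unique vertex of $F\cup V_1\setminus C(v)$ spanning $uv$ coincides with the unique such vertex spanning $vx$, namely $q_1$, and $q_1$ did receive value $1$ from $vx$ in that same application of Step~3; hence again $q=q_1\notin F'$. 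The paper handles both cases uniformly by simply observing that $e$ appears as one of the two labeled edges in the Step~3 configuration and invoking Lemma~\ref{ano} once.

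Your first sub-argument (ruling out $e\notin L$) is correct but more elaborate than necessary: the paper dispatches it in one line by noting that any edge spanned by a vertex of $F$ is, by definition of $L$ and maximality of $S$, automatically in $L$.
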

\begin{proof}
Assume to the contrary that there is a vertex $z\in F'$ that spans an edge $e\notin L'$.
Note that $z\in F$  can only span edges in $L$. Thus, $e\in L\setminus L'$, i.e., the value of $e$ is 0 now.
Only Step 3 will create labeled edges with a value of 0. Therefore, edge $e$ appears as a labeled edge in the graph structure described in Step 3. By Lemma \ref{ano}, we know that $z$ is the unique vertex in $F$ spanning $e$ and the value of $z$ is at least 1 after Step 3, a contradiction to $z\in F'$.
Thus, vertices $F'$ can only span edges in $L'$.
\hfill \qed\par \end{proof}

\begin{lemma}\label{v1l}
 Vertices in $V_1'$ only span edges in $L'$.
\end{lemma}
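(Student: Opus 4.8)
The plan is to argue by contradiction, in the spirit of the proof of Lemma~\ref{fl}. Suppose $z\in V_1'$ spans an edge $e\notin L'$. First I record the properties of $z$ that will be used. Since $z\in V_1$, the vertex $z$ is not an endpoint of any labeled edge, so no edge incident to $z$ lies in $L$; also $z\in V(S_1)$ but $z\notin V(S_2)\cup V(S_3)$, so every triangle of $S$ through $z$ belongs to $S_1$, and by the definition of \emph{excellent} the two edges of such a triangle that are incident to $z$ lie in no triangle of $G-L$. In particular $z$ is in an excellent triangle $zab$ with $ab\in L$. Finally, since $z$ has value $0$ after Step~2 (Step~3 never decreases a vertex value), Lemma~\ref{a2} forces the triangle component $Q$ containing $z$ to contain no triangle of $S_3$ and no triangle of $S_2$: every triangle of $Q$ is excellent.

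As $e\notin L'$, either $e\in L\setminus L'$ or $e\notin L$. If $e\in L\setminus L'$, then, because only Step~3 turns a labeled edge into an edge of value $0$, the edge $e$ occurs as one of the two labeled edges in a Step~3 configuration given by triangles $vuw,vxy\in S_2$ sharing a vertex $v$. Since $z$ spans $e$, and by Lemma~\ref{ano} together with Step~3 the unique vertex outside $C(v)$ that spans $e$ lies in $F$, whereas $z\notin F$, we must have $z\in C(v)$. Then $Q$ is the triangle component of $v$ and contains $vuw,vxy\in S_2$, contradicting that $Q$ has no triangle of $S_2$.

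Now assume $e\notin L$ and write $e=xy$, so that $zxy$ is a triangle and, as none of $zx,zy,xy$ is labeled, $zxy\subseteq G-L$. If $zx\in E(S)$ or $zy\in E(S)$, that edge lies in an $S$-triangle through $z$, hence in no triangle of $G-L$, contradicting $zxy\subseteq G-L$; so $zx,zy\in R$, and by maximality of $S$ we get $xy\in E(S)$. Let $xyd\in S$ be the $S$-triangle containing $xy$; then $d\neq z$ (otherwise $zx,zy\in E(S)$) and $\{x,y\}\cap\{a,b\}=\emptyset$ (otherwise $za$ or $zb$ would lie in $zxy\subseteq G-L$). If $xyd$ is good, its labeled edge is $xd$ or $yd$ since $xy\notin L$; say $xd\in L$ is spanned by $g\in F$, so $xg,dg\in R$, and then $zxy$ and $xdg$ are two edge-disjoint triangles of $G[R\cup\{xy,xd,yd\}]$, so Reduction Rule~\ref{increasing} applies to $xyd$ -- a contradiction. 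Hence $xyd\in S_3$, so $x,y,d\in V(S_3)$, the triangle $xyd$ lies in a component $Q''\neq Q$, and $\{z,a,b\}\cap\{x,y,d\}=\emptyset$. To close this last sub-case one works harder: the non-applicability of Reduction Rule~\ref{increasing} to $xyd$, together with $G[R]$ being triangle-free, forces the triangles meeting $\{xy,xd,yd\}$ to be tightly constrained, and combining this with the excellence of every $S$-triangle through $z$ shows that the edges at $z$ partition into those lying in $S$-triangles through $z$ and the pair $\{zx,zy\}$, with no triangle of $G$ meeting both parts, so Reduction Rule~\ref{split} applies to $z$ -- the final contradiction. (If instead some triangle of $G$ through $z$ does link an edge at $z$ of an $S$-triangle to one of $zx,zy$, then pushing the excellence constraint along that triangle forces a new labeled edge incident to a neighbour of $a$ inside $Q$; iterating inside the finite, all-excellent component $Q$ yields a contradiction with Lemma~\ref{Loe} or Reduction Rule~\ref{increasing}.)

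I expect the last configuration -- $e\notin L$ with $xy$ in a bad $S$-triangle $xyd$ lying in a component different from $z$'s -- to be the main obstacle. There the two $S$-triangles naturally attached to it, $zab$ and $xyd$, are vertex-disjoint, so Reduction Rule~\ref{increasing3} cannot be applied to them and Reduction Rule~\ref{increasing2} only produces two edge-disjoint triangles out of $G[R\cup E(zab)\cup E(xyd)]$; the argument has to fall back on Reduction Rule~\ref{split} and on the finiteness of the all-excellent component $Q$. Every other case is routine bookkeeping with the excellence property, the maximality of $S$, and the non-applicability of Reduction Rules~\ref{increasing}--\ref{increasing3}.
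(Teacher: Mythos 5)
Your first case ($e\in L\setminus L'$) is correct and is handled more explicitly than in the paper: the paper's proof of this lemma does not even separate it out, relying instead on the observation that all labeled edges of $S_1$-triangles stay in $L'$. Your reduction of the second case to ``$e=xy\notin L$, $zx,zy\in R$, $xy\in E(S)$, and the $S$-triangle $xyd$ containing $xy$ is bad and lies in a component different from $z$'s'' is also sound. But that final sub-case is exactly where your argument stops being a proof: the claim that the edges at $z$ split into the $E(S)$-edges at $z$ and the pair $\{zx,zy\}$ with no triangle of $G$ meeting both parts is asserted, not established (there may be further edges at $z$ outside $E(S)$, and nothing you have shown prevents a triangle from joining an $S$-edge at $z$ to some such edge), and the parenthetical fallback about ``pushing the excellence constraint along that triangle'' and ``iterating inside the finite, all-excellent component'' is a sketch of an induction you never set up. You correctly identify this as the main obstacle, but you do not overcome it.

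The paper closes precisely this gap by proving, before classifying spanned edges at all, that $N(v')\subseteq V(Y)$ for the triangle component $Y$ containing $v'$. The partition fed to Reduction Rule~\ref{split} is not yours but the one by component membership: $E_1$ = edges from $v'$ to $V(Y)$, $E_2$ = edges from $v'$ to $V\setminus V(Y)$. A triangle $x'v'u'$ with $x'v'\in E_1$ and $u'v'\in E_2$ is ruled out by a short two-case argument (maximality of $S$ if $x'v'\notin E(S)$; Reduction Rule~\ref{increasing} otherwise), so Rule~\ref{split} would fire unless $E_2=\emptyset$. Note that your problematic triangle $zxy$ has both of its $z$-edges in $E_2$, so it never threatens this partition --- which is why the paper's choice works where yours stalls. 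Once $N(v')\subseteq V(Y)$ is known, your bad sub-case evaporates: $x,y\in V(Y)$ forces $xyd\subseteq Y$, contradicting that $Y$ is all-excellent, and the remaining possibilities ($ab\in E(S)\setminus L$ or $ab\notin E(S)$) die quickly against the excellence of the triangles of $Y$. I would recommend restructuring your proof around that neighborhood containment rather than trying to repair the iteration argument.
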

\begin{figure}[t]
    \centering
    \includegraphics[scale=0.25]{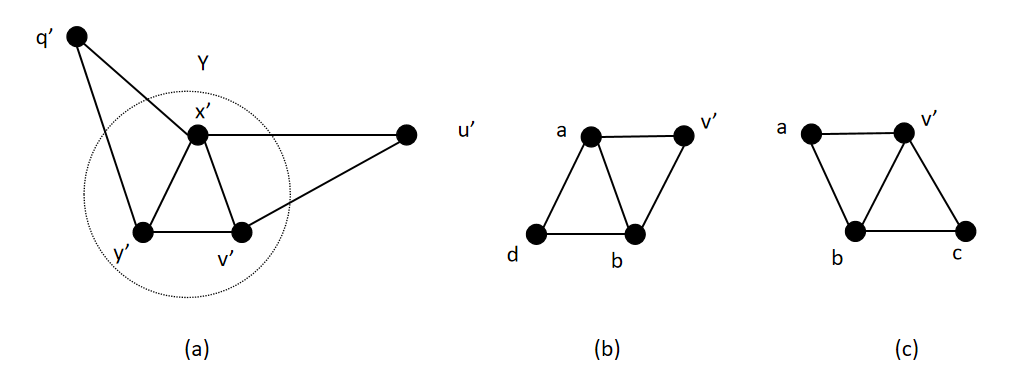}
    \caption{Some cases in the proof of Lemma \ref{v1l}}
    \label{fl11}
\end{figure}
\begin{proof}
Let $v'\in V_1'$ and $Y$ be the triangle component containing $v'$. First, we show that $N(v')\subseteq V(Y)$.

We let $E_1$ denote the set of edges with one endpoint being $v'$ and the other endpoint being in $N(v')\cap V(Y)$ and $E_2$ denote the set of edges with one endpoint being $v'$ and the other endpoint being in $N(v')\setminus V(Y)$.
Assume that there are edges $x'v'\in E_1$ and $u'v'\in E_2$ forming a triangle $x'v'u'$ in $G$ and we show some contradiction. If $x'v'\notin E(S)$, then none of $x'v'$, $u'v'$, and $x'u'$ is contained in $E(S)$
since $u'$ is not in the triangle component containing $v'$. which is a contradiction to the maximality of $S$. Otherwise $x'v'\in E(S)$ and we let triangle $x'v'y'\in S$ be the triangle containing $x'v'$.
Since the value of $v_1$ is 0, by Lemma~\ref{a2}, we know that $Y$ could not contain any triangle in $S_2\cup S_3$ and then $x'v'y'\in S_1$.
We know that $x'y'$ is the labeled edge since $v'\in V_1'$. Let $q'\in F$ be the vertex spanning $x'y'$.
We can see that $q'\neq u'$ otherwise all the three edges in triangle $x'y'v'$ are labeled edges spanned by a vertex $q'\in F$, a contradiction to Lemma~\ref{Loe}.
However, for this case, Reduction Rule \ref{increasing} could be applied to replace triangle $x'v'y'$ with triangles $q'x'y'$ and $u'x'v'$ in $S$, a contradiction to the factor that the graph is reduced.
Therefore, no triangle in $G$ contains edges both from $E_1$ and $E_2$. If both of $E_1$ and $E_2$ are not empty, then the condition of Reduction Rule~\ref{split} would hold, a contradiction.
Since $E_1$ can not be empty, we know that $E_2$ is empty. Thus, we have  $N(v')\subseteq V(Y)$. See Figure~\ref{fl11}(a) for an illustration of the proof.

Assuming that $v'$ spans an edge $ab$, we can conclude that both $a$ and $b$ are in $V(Y)$ since $N(v')$ is a subset of $V(Y)$. Our goal is to prove that $ab$ belongs to $L$.
If $ab\in E(S)\setminus L$, then $ab$ is contained in a triangle $abd\in S_1$ since $Y$ contains only triangles in $S_1$.
We have $d\neq v'$, otherwise the triangle $abv'\in S_1$ would have no labeled edge, a contradiction.
The triangle $abd\in S_1$ contains an edge $ab$ which will be in a triangle $abv'$ in $G-L$, which contradicts the definition of $S_1$ (See Figure~\ref{fl11}(b)).
If $ab\not\in E(S)$, then at least one of $bv'$ and $av'$, say $bv'$, is contained in $E(S)$ by the maximality of $S$.
 Let $bv'c\in S_1$ be the triangle containing $bv'$ where $c\neq a$ since $ab\not\in E(S)$.
  The triangle $bcv'\in S_1$ contains an edge $bv'$ which will be in a triangle $abv'$ in $G-L$, which contradicts the definition of $S_1$ (See Figure~\ref{fl11}(c)).
  Thus, it holds that $ab\in L$.
Any vertex $v'\in V_1'$ only span edges $ab \in L$, where both $a$ and $b$ are in $V(Y)$. Then, edge $ab$ can only be a labeled edge in a triangle in $S_1$.
Since all labeled edges in triangles in $S_1$ are in $L'$, we know that vertices in $V_1'$ can only span edges in $L'$.
\hfill \qed\par\end{proof}

\begin{lemma}\label{fv1}
After Step~3, it holds that $|F'\cup V_1'|\leq |L'|$.
\end{lemma}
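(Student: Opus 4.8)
The plan is to apply the fat-head crown decomposition machinery of Lemma~\ref{lemma_cr} (or equivalently Lemma~\ref{FCD_exist}) to the set $F'\cup V_1'$, using the reducedness of the instance — in particular the fact that Reduction Rule~\ref{crown} cannot be applied — to derive the bound. First I would set $I = F'\cup V_1'$ and let $S(I)$ denote the set of edges spanned by at least one vertex of $I$. By Lemma~\ref{independent}, $I$ is an independent set, and by Lemmas~\ref{fl} and~\ref{v1l}, every vertex in $I$ spans only edges in $L'$, so $S(I)\subseteq L'$. Assume for contradiction that $|F'\cup V_1'| > |L'| \geq |S(I)|$.

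Next I would invoke Lemma~\ref{lemma_cr} (after first checking its hypothesis that every vertex and edge lies in a triangle, which holds because Reduction Rule~\ref{rule1} has been applied): since $I$ is a non-empty independent set with $|I| > |S(I)|$, there is a fat-head crown decomposition $(C,H,X)$ with $C\subseteq I$, $H\subseteq S(I)\subseteq L'\subseteq L$, and a witness packing of size $|H|>0$, computable in polynomial time. Now I need to observe that this $(C,H,X)$ is exactly the kind of crown decomposition that Reduction Rule~\ref{crown} looks for: we need $C\subseteq V\setminus V(L)$. This requires checking that no vertex of $F'\cup V_1'$ is an endpoint of a labeled edge. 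For $F$-vertices this is immediate since $F\cap V(S)=\emptyset$ and labeled edges lie in $E(S)$; for $V_1$-vertices it follows from the definition $V_1 = V(S_1)\setminus(V(L)\cup V(S_2)\cup V(S_3))$, which explicitly excludes $V(L)$. Hence $C\subseteq V\setminus V(L)$ and $H\subseteq L$, so Reduction Rule~\ref{crown} would apply to the reduced instance — contradicting the assumption that the instance is reduced.

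The one genuine subtlety I expect to be the main obstacle is making sure the hypotheses of Lemma~\ref{lemma_cr} are legitimately available at this point — specifically that "each edge and each vertex is contained in at least one triangle." Reduction Rule~\ref{rule1} guarantees this in the original reduced graph, but one must be careful that later rules (e.g.\ Reduction Rule~\ref{split}, which splits a vertex) do not destroy this property; since every new vertex created by splitting inherits a set of incident edges each of which still lies in a triangle (the split only separates edges that share no triangle), the property is preserved, and Reduction Rule~\ref{rule1} is re-applied anyway in the iteration. Once that is settled, the argument is a clean contradiction: the existence of $F'\cup V_1'$ with $|F'\cup V_1'| > |L'|$ would force an applicable crown reduction, so the reduced instance must satisfy $|F'\cup V_1'|\leq |L'|$.
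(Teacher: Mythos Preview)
Your proposal is correct and follows essentially the same route as the paper: assume $|F'\cup V_1'|>|L'|$, use Lemmas~\ref{independent},~\ref{fl},~\ref{v1l} together with Lemma~\ref{lemma_cr} to produce a fat-head crown decomposition with $C\subseteq F'\cup V_1'\subseteq V\setminus V(L)$ and $H\subseteq L'\subseteq L$, and then contradict reducedness via Reduction Rule~\ref{crown}. The extra care you take in verifying the hypothesis of Lemma~\ref{lemma_cr} and the containment $C\subseteq V\setminus V(L)$ is sound and only makes the argument more explicit than the paper's version.
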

\begin{proof}
By Lemma~\ref{independent},~\ref{fl}, and~\ref{v1l}, we know that $F'\cup V_1'$ is an independent set and any vertex $v'\in F'\cup V_1'$ only span edges in $L'$.
If $|F'\cup V_1'|> |L'|$, then by Lemma~\ref{lemma_cr} there is a fat-head crown decomposition $(C,H,X)$ of $G$ with $C\subseteq F'\cup V_1'$ and $H\subseteq L'$. Moreover, the fat-head crown decomposition can be detected by Lemma \ref{FCD_exist} and will be handled by Reduction Rule \ref{crown} since $F'\cup V_1'\subseteq F\cup V_1\subseteq V\setminus V(L)$ and $L'\subseteq L$. Thus, we know the lemma holds.
\hfill \qed\par
\end{proof}

\textbf{Step 4:} We transform value from edges in $L'$ to vertices in $F'\cup V_1'$ such that each vertex in $F'\cup V_1'$ gets value at least 1 by Lemma \ref{fv1}.

After Step 4, each vertex in $G$ has a value of at least 1. Since the total value in $G$ is at most $3k$, we can conclude that the graph has at most $3k$ vertices.
\begin{theorem}
\textsc{Edge Triangle Packing} and  \textsc{Edge Triangle Covering} admit a kernel of at most $3k$ vertices.
\end{theorem}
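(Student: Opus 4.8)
The proof of the theorem is essentially the bookkeeping already carried out across Steps 1--4 and Lemmas~\ref{Loe}--\ref{fv1}, so my plan is to assemble those pieces into a clean accounting argument. First I would recall that by Reduction Rule~\ref{packing} the current packing satisfies $|S|\le k$, hence $k_1+k_2+k_3\le k$, and that by the Initialization together with Lemma~\ref{Loe} the total value placed in the graph equals $3k_1+3k_2+3k_3\le 3k$. The crucial invariant, to be stated explicitly, is that none of Steps 1--4 ever creates or destroys value: each step only \emph{moves} an integer unit from one object (edge or triangle) to a vertex, or between vertices inside a single triangle component. So the total value is $\le 3k$ throughout.

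Next I would verify that after Step 4 every vertex carries value at least $1$, by exhausting the possible locations of a vertex. A vertex of $V(L)$ or $V(S_3)$ receives value $\ge 1$ already in Step~1. A vertex in a triangle component meeting $S_3$ gets value $\ge 1$ by Lemma~\ref{a2}(i). A vertex of $V_2$ gets value $\ge 1$ by Lemma~\ref{nov2} (using Steps 2 and 3). After Step~3 the only vertices possibly still at value $0$ lie in $F'\cup V_1'$, and Lemma~\ref{fv1} guarantees $|F'\cup V_1'|\le|L'|$, where $L'$ is the set of labeled edges still holding value $1$; since $F'\cup V_1'$ is independent and each such vertex spans only edges of $L'$ (Lemmas~\ref{independent},~\ref{fl},~\ref{v1l}), Step~4 can hand one unit from each chosen edge of $L'$ to one vertex of $F'\cup V_1'$ with no conflicts, raising every remaining vertex to value $\ge 1$. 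Hence $|V|\le$ (total value) $\le 3k$.

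The main obstacle — already handled by the lemmas — is ensuring that Step~4's transfer is actually feasible, i.e. that the ``deficit'' set $F'\cup V_1'$ of value-$0$ vertices is no larger than the ``surplus'' set $L'$ and that the incidences line up so the transfer can be realized object-by-object. This is exactly what Lemma~\ref{fv1} delivers: if $|F'\cup V_1'|>|L'|$ one invokes Lemma~\ref{lemma_cr} to produce a fat-head crown decomposition with $C\subseteq F'\cup V_1'$ and $H\subseteq L'$, which Reduction Rule~\ref{crown} would then apply, contradicting reducedness. I would therefore present the theorem's proof as: (1) restate the total-value bound $3k$ and the invariance of the total value under Steps 1--4; (2) walk through the case analysis above to conclude every vertex ends with value $\ge 1$; (3) conclude $n\le 3k$. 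The same argument applies verbatim to ETC, since, as noted, the discharging analysis does not distinguish the two problems (the only difference being whether $|S|>k$ triggers a ``yes'' or a ``no'' in Reduction Rule~\ref{packing}, which does not affect the size bound on a reduced instance). This yields the claimed $3k$-vertex kernels for both \textsc{Edge Triangle Packing} and \textsc{Edge Triangle Covering}.
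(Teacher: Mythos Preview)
Your proposal is correct and follows essentially the same approach as the paper: the theorem is simply the conclusion of the discharging argument, and you have correctly assembled the pieces (the $3k$ total-value bound from Initialization and Lemma~\ref{Loe}, the value-preserving nature of Steps~1--4, and the case analysis via Lemmas~\ref{a2}, \ref{nov2}, and~\ref{fv1} showing every vertex ends with value $\ge 1$). The paper's own proof is just the single sentence following Step~4, so your write-up is a faithful and slightly more explicit rendering of the same argument.
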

\section{Conclusion}\label{sec_con}
In this paper, we present simultaneous improvements in the kernel results for both \textsc{Edge Triangle Packing} and \textsc{Edge Triangle Covering}. Our approach incorporates two key techniques to achieve these enhancements.
The first technique involves utilizing fat-head crown decomposition, which enables us to effectively reduce various graph structures. By applying this technique, we can simplify the problem instances.
The second technique we introduce is the discharging method, which plays a crucial role in analyzing kernel size. This method is simple and intuitive, and we believe it has the potential to be applied to the analysis of other kernel algorithms.
\newpage

\noindent\textbf{Acknowledgments.} \rm{The work is supported by the National Natural Science Foundation of China, under the grants 62372095 and 61972070.}

\bibliography{ETP}

\end{document}